\documentclass[final,5p,times,twocolumn]{elsarticle}

\usepackage{graphicx}
\usepackage{amssymb,amsmath,mathrsfs,mathtools}
\usepackage{booktabs}
\usepackage{tabularray}
\usepackage{longtable}
\usepackage{multirow}
\usepackage{array}
\usepackage{bigints}
\usepackage{xurl}
\usepackage{witharrows}
\usepackage{amsthm}
\usepackage[T1]{fontenc}
\usepackage{lmodern}
\usepackage[utf8]{inputenc}
\usepackage{ifthen}
\usepackage{subcaption}
\usepackage[font=scriptsize,labelfont=bf]{caption}
\usepackage[colorlinks=true,linkcolor=blue,citecolor=blue,urlcolor=blue]{hyperref}
\usepackage{algorithm}
\usepackage{algpseudocode}

\newtheorem*{remark}{Remark}
\newtheorem{theorem}{Theorem}[section]

\newtheorem{pr}[theorem]{Proposition}

\newtheorem{cor}[theorem]{Corollary}

\newtheorem{assum}{Assumption}

\allowdisplaybreaks

\begin{document}

\begin{frontmatter}
\title{Physics‐Informed Neural Network Surrogates with Polynomial Chaos Based Uncertainty Propagation for Stochastic Model Predictive Control}

\author[mit]{Srimanta Santra}
\author[mit]{Romir Patel}
\author[mit]{Saikat Mukherjee}
\author[uw]{Steven L. Brunton}
\author[mit]{Richard D. Braatz\corref{cor1}}
\cortext[cor1]{Corresponding author. Email: braatz@mit.edu}

\address[mit]{Massachusetts Institute of Technology, Cambridge, MA 02139, USA}
\address[uw]{Department of Mechanical Engineering, University of Washington, Seattle, WA 98195, USA}

\begin{abstract}
Stochastic partial differential equations (PDEs) govern critical engineering and geophysical systems but are challenging to use for real-time control under parametric uncertainty. We present a unified framework that couples Physics-Informed Neural Networks (PINNs)
with Polynomial Chaos Expansion (PCE) to construct a fast and differentiable surrogate.
The PCE representation enables analytical propagation of parametric uncertainty and
computation of the corresponding moments without requiring Monte Carlo sampling.
We provide an error decomposition for the PINN--PCE surrogate that separates PCE truncation, stochastic quadrature, and PINN approximation errors. Embedding this surrogate into a stochastic model predictive control (SMPC) scheme enables finite-horizon control updates based on analytic mean and covariance predictions. We further show how the surrogate approximation error can be incorporated into tightened probabilistic constraints.
The approach is validated on three benchmarks: the Korteweg–de Vries, Burgers’, and two-dimensional incompressible Navier–Stokes equations capturing dispersive, fully convective and convective-diffusive dynamics. Across all cases, the surrogate achieves real-time controller updates while maintaining prescribed risk levels, matching high-fidelity solvers at a fraction of the computational cost.
\end{abstract}

\begin{keyword}
Physics-Informed Neural Networks, Polynomial Chaos Expansion, Stochastic Model Predictive Control, Uncertainty Quantification, Stochastic Partial Differential Equations, Real-Time Control
\end{keyword}

\end{frontmatter}

\section{Introduction}\label{sec:intro}

Nonlinear partial differential equations (PDEs) underpin many critical engineering and geophysical applications such as coastal solitons, microfluidic mixing, turbulence regularization, and aerodynamic flow control \citep{drazin1989solitons,ghia1982lid}. Spatial discretization schemes such as finite difference, finite volume, or finite element methods can accurately capture these dynamics but demand fine space-time grids and computationally intensive calculations. For instance, a single high-resolution ocean simulation can take hours to days on modern supercomputers \citep{GneitingRaftery2005}, making ensemble runs for uncertainty quantification computationally prohibitive. This has motivated the development of physics-informed surrogate models that replicate PDE dynamics with significantly reduced computational cost.

Physics-Informed Neural Networks (PINNs) address this challenge by embedding PDE residuals directly into the loss function of a neural network, ensuring physical consistency while learning from data \citep{Lu2021DeepXDE}. PINNs have achieved orders-of-magnitude speedups in forward simulations of equations such as Burgers', reaction-diffusion, and Navier-Stokes, and have enabled inverse problems, such as identifying unknown parameters from sparse data \citep{Bertaglia2022,Bertaglia2023,Jin2022}.

Real-world PDEs often involve uncertainties in parameters, boundary conditions, or forcing terms, transforming deterministic PDEs into high-dimensional stochastic PDEs \citep{lord2014stochastic}. Traditional uncertainty quantification methods such as Monte Carlo sampling are computationally expensive due to the need for numerous PDE solves \citep{robert2005monte}. Polynomial Chaos Expansion (PCE) provides an efficient alternative by projecting random inputs onto an orthogonal polynomial basis, yielding closed-form expressions for statistical moments without extensive sampling \citep{xiu2002wiener,lemaitre2010spectral}. By integrating PCE with PINNs, we create a differentiable surrogate that propagates uncertainty analytically, achieving significant computational savings over sampling-based approaches.

The motivation for this work lies in enabling real-time, uncertainty-aware control for systems governed by stochastic PDEs, particularly in stochastic model predictive control (SMPC). SMPC requires solving constrained optimization problems at each time step while enforcing probabilistic constraints on uncertain states and inputs, a task rendered intractable by the computational cost of repeated PDE solves in scenario-based approaches \citep{mesbah2016stochastic}. A PINN-PCE surrogate overcomes this limitation by providing analytic moment predictions, enabling efficient SMPC for real-time applications.

We propose a unified \emph{PINN--PCE} surrogate framework that integrates PINNs with PCE and uses the resulting analytic moments inside an SMPC scheme. The framework is validated on three canonical benchmarks: the one-dimensional Korteweg-de Vries (KdV) equation, the viscous Burgers' equation, and the two-dimensional incompressible Navier-Stokes equations \citep{drazin1989solitons,whitham1974waves,stone2004microfluidic}. These benchmarks are chosen for their relevance to applications like coastal solitons, shock wave dynamics, and microfluidic mixing. While PINNs have been applied to individual PDEs \citep{raissi2019physics}, their integration with PCE for closed-loop SMPC addresses the critical need for computationally efficient uncertainty quantification in control.

Our primary contributions are:
\begin{enumerate}
    \item A robust methodology for training a PINN-PCE surrogate that captures nonlinear PDE dynamics and propagates parametric uncertainty analytically, avoiding Monte Carlo sampling inside the online MPC loop.
    \item An SMPC implementation that uses the PINN--PCE surrogate's analytic moments to enforce probabilistic state and input constraints in real time. The SMPC formulation follows standard mean--covariance MPC principles, while the main contribution is the construction of the prediction model directly from the PINN--PCE coefficients.
    \item Validation of the stochastic-parameter propagation on the KdV and Burgers' benchmarks with 30--100 times computational speedup compared to Monte Carlo-based SMPC, and a scalability demonstration of the SMPC design on the 2D incompressible Navier--Stokes equations.
\end{enumerate}

We establish theoretical error bounds for the surrogate’s accuracy and demonstrate its efficacy through three case studies. The paper is organized as follows: Section~\ref{sec:prob} formalizes the stochastic PDE models, Section~\ref{sec:pinn_pce_algo} details the PINN-PCE surrogate, Section~\ref{sec:smpc} presents the SMPC formulation, Section~\ref{sec:case_study} reports numerical results, and Section~\ref{sec:conclusion} concludes.

\section{Problem Formulation}
\label{sec:prob}
Let $\Omega \subset \mathbb{R}^d$ 
represent a spatial domain with boundary $\partial\Omega$, 
$t \in [0,T]$
a time horizon and $(\Xi,\mathcal{F},\mu)$ a probability space describing parametric uncertainty. Consider the stochastic PDE
\begin{alignat}{2}
\label{eq:pde_general}
&\mathcal{L}\bigl(w(\boldsymbol{x},t,\xi)\bigr)=f(\boldsymbol{x},t,\xi),
  \quad &&(\boldsymbol{x},t,\xi)\in\Omega\times[0,T]\times\Xi, \nonumber\\
  &w(\boldsymbol{x},0,\xi) = w_0(\boldsymbol{x},\xi),\nonumber\\
  &w(\boldsymbol{x},t,\xi) = b(\boldsymbol{x},t,\xi),
  \quad &&\boldsymbol{x}\in\partial\Omega
\end{alignat}
Here, $w$ denotes the PDE state, $b$ denotes the boundary data, and $f$ is a source term.
Assume that $w$ belongs to the Bochner space
$L^2_\mu(\Xi; L^2(0,T; H^s(\Omega)))$, for $s > 0$.
Here, $H^s(\Omega)$ is the Sobolev space of order $s$ on $\Omega$, and $L^2_\mu(\Xi) = L^2(\Xi, \mu)$ is the space of square integrable functions with respect to the probability measure $\mu$. This assumption guarantees finite $L^{2}$ energy in the physical domain and sufficient smoothness in the stochastic dimension for the
PCE analysis that follows \cite{xiu2002wiener}.

\begin{table}[t]
\centering
\caption{Notation used throughout the paper.}
\label{tab:notation}
\small
\begin{tabular}{@{}ll@{}}
\toprule
Symbol & Meaning \\
\midrule
\(w(\boldsymbol{x},t,\xi)\) & Stochastic PDE state (generic formulation) \\
\(u(x,t,\xi)\) & PDE state in case studies (alias for \(w\)) \\
\(\boldsymbol{x}\in\Omega\), \(t\in[0,T]\) & Space, time \\
\(\xi\in\Xi\) & Uncertain parameter \\
\(\nu(\xi)\) & Uncertain viscosity or dispersion coefficient \\
\(\Psi_\alpha(\xi)\) & Orthonormal PCE basis function \\
\(c_\alpha(\boldsymbol{x},t)\) & PCE coefficient \\
\(P\) & PCE truncation order \\
\(M\) & Number of uncertain parameters \\
\(N_P=\binom{M+P}{P}\) & Number of retained PCE terms \\
\(N_s\), \(\{\xi^{(s)},\omega_s\}\) & Quadrature nodes and weights \\
\(\theta\in\mathbb{R}^{N_\theta}\) & PINN parameters \\
\(\widehat{w}(\cdot;\theta)\) & Trained PINN evaluation \\
\(\widehat{W}_P\) & PINN--PCE surrogate \\
\(\mathbf{z}_k(\xi)\in\mathbb{R}^n\) & Discretized state at time step \(k\) \\
\(\mathbf{v}_k\in\mathcal{U}\) & Generic SMPC control input \\
\(U(t)\) & Scalar control input (Burgers', KdV) \\
\(f_x(t),f_y(t)\) & Scalar control inputs (Navier--Stokes) \\
\(\mathbf{F}(\mathbf{z}_k,\mathbf{v}_k,\xi)\) & Discrete one-step transition map \\
\(\boldsymbol{\mu}_k,\boldsymbol{\Sigma}_k\) & Predicted mean and covariance \\
\(\varepsilon_P\) & Surrogate error bound (constraint tightening) \\
\(E_p\) & PINN validation error (Assumption~\ref{assume:pinn_accuracy}) \\
\bottomrule
\end{tabular}
\end{table}

\section{PINN with PCE Framework}
\label{sec:pinn_pce_algo}
By training neural network weights using physical laws, PINNs ensure physics-consistent solutions. On the other hand, PCE provides a framework for quantifying and propagating uncertainty. This section introduces a coupled PINN-PCE framework that bridges deterministic PDE-constrained learning with uncertainty quantification, enabling robust and data-efficient modeling.

\subsection{Physics-Informed Neural Network (PINN)}
A physics-informed neural network \cite{raissi2019physics, karniadakis2021physics} approximates the solution of a PDE through a feedforward network $w_{\theta}(\boldsymbol{x},t)$. The parameters $\theta$ are trained such that the loss function of the neural network penalizes violations of the governing PDE, and its initial and boundary conditions.

\subsection{Polynomial Chaos Expansion (PCE)}
Given a stochastic field \(w(\boldsymbol{x},t,\xi)\), a PCE approximates its dependence on the random parameter \(\xi\in\Xi\subset\mathbb{R}^M\) as
\begin{align}
\label{eq:pce}
w(\boldsymbol{x}, t, \xi)
=
\sum_{\alpha \in \mathbb{N}_0^M}
c_\alpha(\boldsymbol{x}, t)\Psi_\alpha(\xi),
\end{align}
where \(\mathbb{N}_0=\{0,1,2,\ldots\}\), \(M\) is the number of independent stochastic variables, and
\(\alpha=(\alpha_1,\ldots,\alpha_M)\) is a multi-index. In computations, the expansion is truncated to the total-degree index set with the highest order of the polynomial $P$, such that
\[
\mathcal{A}_{M,P}
=
\{\alpha\in\mathbb{N}_0^M:\ |\alpha|=\alpha_1+\cdots+\alpha_M\le P\},
\]
whose cardinality is
\begin{equation}
\label{eq:number_of_terms}
N_P = |\mathcal{A}_{M,P}| = \binom{M+P}{P}.
\end{equation}
$\{\Psi_{\alpha}\}$ is the set of multivariate orthonormal polynomial basis functions with respect to the probability measure $\mu$, i.e.,
\begin{alignat}{2}
            \mathbb{E}\bigl[\Psi_{\alpha}\Psi_{\beta}\bigr]
          =\int_{\Xi}\Psi_{\alpha}(\xi)\Psi_{\beta}(\xi)\,\mathrm{d}\mu(\xi)
          =\delta_{\alpha\beta}.
\end{alignat}
Here, $\mathbb{E}[\cdot]$ is the expectation operator and $\delta_{\alpha\beta}$ is the Kronecker delta. The coefficients $c_{\alpha}(\boldsymbol{x},t):=\mathbb{E}\bigl[w(\boldsymbol{x},t,\xi)\Psi_{\alpha}(\xi)\bigr]$ propagate the uncertainty through space and time \cite{xiu2003modeling}.

\subsection{Coupling PINNs and PCE}
\label{sec:pinn_pce_coupling}
For fixed realizations of the random parameter $\xi$, \eqref{eq:pde_general} reduces to a deterministic PDE in $(\boldsymbol{x}, t)$ \cite{ping2024uncertainty}. To approximate the solution $w(\boldsymbol{x}, t, \xi)$ across multiple realizations of $\xi$, we train a PINN which we represent as
\begin{align}
    w_{\theta}\!: (\boldsymbol{x}, t, \xi) \;\longrightarrow\; \widehat{w}(\boldsymbol{x},t,\xi;\theta)
\end{align}
The choice of $\theta$ determines the network’s ability to satisfy realizations of \eqref{eq:pde_general}. The training procedure must choose $\theta^{\star}$ to provide accurate solutions, which are essential for subsequent uncertainty quantification via PCE.

Using $\xi^{(j)}$ to represent the $j^{\text{th}}$ realization of $\xi$, we define the PINN loss
for that realization as
\begin{align}\label{eq:loss_terms}
\mathcal{J}_{\mathrm{PINN}}^{(j)}(\theta)
  &= \lambda_{\mathrm{PDE}} \mathcal{L}_{\mathrm{PDE}}^{(j)}
   + \lambda_{\mathrm{IC}}  \mathcal{L}_{\mathrm{IC}}^{(j)}
   + \lambda_{\mathrm{BC}}  \mathcal{L}_{\mathrm{BC}}^{(j)}, \\
  \mathcal{J}_{\mathrm{PINN}}(\theta)
  &= \frac{1}{N_s} \sum_{j=1}^{N_s} \mathcal{J}_{\mathrm{PINN}}^{(j)}(\theta),  \end{align}
where the loss terms in \eqref{eq:loss_terms} are
\begin{alignat}{2}
\label{eq:pinn_6}
&\mathcal{L}_{\mathrm{PDE}}^{(j)} 
= \frac{1}{N_q} 
  \sum_{i=1}^{N_q} 
  \bigl| \mathcal{L}\widehat{w}(\boldsymbol{x}_{i}, t_{i}, \xi^{(j)}; \theta) 
        - f(\boldsymbol{x}_{i}, t_{i}, \xi^{(j)}) \bigr|^{2},\\
 \label{eq:pinn_7}
&\mathcal{L}_{\mathrm{IC}}^{(j)} 
= \frac{1}{N_{\mathrm{IC}}} 
  \sum_{i = N_q+1}^{N_q+N_{\mathrm{IC}}} 
  \bigl| \widehat{w}(\boldsymbol{x}_{i}, 0, \xi^{(j)}; \theta) 
        - w_{0}(\boldsymbol{x}_{i}, \xi^{(j)}) \bigr|^{2}, \\
\label{eq:pinn_8}
&\mathcal{L}_{\mathrm{BC}}^{(j)} 
= \frac{1}{N_{\mathrm{BC}}} 
  \sum_{i = N_q+N_{\mathrm{IC}}+1}^{N_q+N_{\mathrm{IC}}+N_{\mathrm{BC}}} 
  \bigl| \widehat{w}(\boldsymbol{x}_{i}, t_{i}, \xi^{(j)}; \theta) 
        - b(\boldsymbol{x}_{i}, t_{i}, \xi^{(j)}) \bigr|^{2}.
\end{alignat}
Here, $\{(\boldsymbol{x}_{i}, t_{i})\}$ are the set of collocation points in space-time coordinates sampled via Latin hypercube sampling, $i\in[1,N_q]$ represents points in the interior of the domain, $i\in[N_q+1,N_q+N_{\mathrm{IC}}]$ represents points such that $t_i=0$, and $i\in[N_q+N_{\mathrm{IC}}+1, N_q+N_{\mathrm{IC}}+N_{\mathrm{BC}}]$ represents points at the boundary such that $\boldsymbol{x}_i\in\partial\Omega$. 
The weights $\lambda_{\mathrm{PDE}}$, $\lambda_{\mathrm{IC}}$, and $\lambda_{\mathrm{BC}}$ are tuned manually to balance the loss terms \cite{wang2021understanding}. 
In this work, we optimize these weights for specific realizations of the stochastic parameter $\xi^{(j)}$, rather than relying on a single nominal value. We believe that this approach promotes a more robust and balanced training process.
In our simulations, the parameter $\theta$ was trained using the gradient-based optimizer Adam until sufficient convergence to an optimal set $\theta^{\star} = \arg\min_{\theta} \mathcal{J}_{\mathrm{PINN}}(\theta)$, typically determined by a loss threshold or early stopping criteria.

For each case study, the loss-weight vector
$\boldsymbol{\lambda}
=
(\lambda_{\mathrm{PDE}},
 \lambda_{\mathrm{IC}},
 \lambda_{\mathrm{BC}})$
is selected before the final training and then kept fixed throughout
the optimization. In particular, the same weight vector is used for
all stochastic realizations $\{\xi^{(j)}\}_{j=1}^{N_s}$; no
realization-dependent tuning is performed. The network parameters are
subsequently obtained by solving
\begin{align*}
   \theta^\star
=
\operatorname*{arg\,min}_{\theta}
\frac{1}{N_s}
\sum_{j=1}^{N_s}
\mathcal{J}_{\mathrm{PINN}}^{(j)}(\theta). 
\end{align*}
Thus, the loss weights are prescribed hyperparameters, whereas
$\theta$ contains the trainable network parameters. The former are
not modified during the optimization of the latter. 



The stochasticity of the solution is captured by training and evaluating the PINN across multiple
realizations of the random parameter $\xi$.
After training, the PINN provides pointwise evaluations
$\widehat{w}(\boldsymbol{x}, t, \xi^{(j)}; \theta^{\star})$
at quadrature nodes $\{\xi^{(j)}\}_{j=1}^{N_s}$.
To construct a stochastic surrogate, we compute the PCE coefficients
$\{\widehat{c}_{\alpha}(\boldsymbol{x}, t;\theta^{*})\}_{|\alpha| \leq P}$ using
non-intrusive spectral projection~\cite{lutjens2021pce},
\begin{align}\label{eq:projection}
\widehat{c}_\alpha(\boldsymbol{x},t;\theta^{*})
\approx \sum_{j=1}^{N_s}
\omega_j \,\widehat{w}(\boldsymbol{x},t,\xi^{(j)};\theta^{\star})\,\Psi_\alpha(\xi^{(j)}),
\end{align}
where $\omega_j$ is the quadrature weight corresponding to $\xi^{(j)}$.

The final PINN--PCE surrogate is
\begin{equation}\label{eq:pinn_pce_surrogate}
\widehat{w}_{P}(\boldsymbol{x}, t, \xi;\theta^\star)
:= \sum_{|\alpha| \leq P} \widehat{c}_{\alpha}(\boldsymbol{x}, t;\theta^\star)\,\Psi_{\alpha}(\xi).
\end{equation}
While related PINN--PCE surrogates have been studied in~\cite{lutjens2021pce,novak2024physics}, our contribution is using the surrogate~\eqref{eq:pinn_pce_surrogate} to construct the finite-horizon mean and covariance prediction model used inside a chance-constrained SMPC problem.

\subsection{Computational scaling of the PINN--PCE surrogate}
\label{subsec:pinn_pce_scaling}
The computational cost of the proposed surrogate has two parts: the offline construction of the PINN--PCE model and the online propagation of its moments inside SMPC. From~\eqref{eq:number_of_terms}, the PCE representation grows polynomially with \(P\) for fixed \(M\), but combinatorially with \(M\). This is the standard dimensionality limitation of polynomial chaos expansions.

If \(C_{\rm PINN}\) denotes the cost of one forward PINN evaluation on the chosen space-time grid, then the total coefficient construction scales as $\mathcal{O}\!\left(N_s N_P C_{\rm PINN}\right)$. After the coefficients are computed, the online mean and covariance predictions scale as
$\mathcal{O}(n)$ and $\mathcal{O}(n^2N_P)$ respectively,
where \(n\) is the dimension of the state.

The PINN training cost depends on the number of stochastic training realizations, collocation points, automatic differentiation operations, and trainable parameters. For one epoch, the dominant cost scales as
\begin{equation}
C_{\rm train}
=
\mathcal{O}
\left(
N_s
\left(N_q+N_{\rm IC}+N_{\rm BC}\right)
C_{\rm AD}(N_\theta)
\right),
\end{equation}
where \(C_{\rm AD}(N_\theta)\) denotes the cost of evaluating the neural network and the derivatives required in the physics residual. Therefore, the proposed approach is most useful when \(M\) is moderate and repeated online uncertainty propagation is required. In this regime, PINN training and PCE projection are performed offline, while the online SMPC problem uses only the propagated PCE coefficients, mean, and covariance.


\subsection{Error Analysis and Convergence of the PINN--PCE Surrogate}
\label{subsec:error_convergence}
Before using the PINN-PCE framework for optimal control, it is important to understand the different sources of numerical error in \eqref{eq:pinn_pce_surrogate}, and their upper bounds. These errors mainly arise from three different sources. First, the finite-order approximation of the stochastic solution via PCE, which truncates the polynomial basis at a finite order $P$. Second, the neural network approximation of the governing equation is limited by the number of trainable parameters $N_\theta$ and the number of quadrature points $N_q+N_{\text{IC}}+N_{\text{BC}}$. Third, the physics-informed loss may not enforce the PDE perfectly. To rigorously validate the convergence of the PINN-PCE surrogate and quantify these error contributions \cite{yarotsky2017error, babuvska2007stochastic}, we establish Theorem~\ref{lem:pce_truncation}. Before proving this result, we state some assumptions.

\begin{assum}[PINN approximation accuracy]
\label{assume:pinn_accuracy}
Let \(D=\Omega\times[0,T]\) and \(\{\xi^{(j)}\}_{j=1}^{N_s}\) denote the stochastic quadrature nodes. For the trained PINN \(\widehat{w}_P(\boldsymbol{x},t,\xi;\theta^\star)\), there exists a finite validation error $E_p$ such that
\begin{equation}
E_p
=
\max_{1\le j\le N_s}
\left\|
w(\cdot,\xi^{(j)})
-
\widehat{w}_P(\cdot,\xi^{(j)};\theta^\star)
\right\|_{L^2(D)}.
\end{equation}
In the numerical studies, \(E_p\) is estimated using independent validation points and high-fidelity reference solutions. When additional approximation regularity is available, \(E_p\) may decrease with the number of trainable parameters and collocation points, but the SMPC construction only requires the existence of an upper bound \(E_p\).
\end{assum}

\begin{assum}[Well-posedness, stochastic regularity, and quadrature accuracy]
\label{assume:number_2}
For each admissible control input and each \(\xi\in\Xi\), the stochastic PDE \eqref{eq:pde_general} has a unique solution
\[
w\in L^2(D;H^r(\Xi))
\]
for some \(r>0\). Moreover, for every retained basis function \(\Psi_\alpha\), the integrand
\(w(\boldsymbol{x},t,\xi)\Psi_\alpha(\xi)\) is sufficiently regular in \(\xi\) so that the quadrature rule satisfies
\[
\left\|
\int_{\Xi} w\Psi_\alpha\,d\mu
-
\sum_{j=1}^{N_s}\omega_j w(\cdot,\xi^{(j)})\Psi_\alpha(\xi^{(j)})
\right\|_{L^2(D)}
\le
C_{\mathrm{quad}}N_s^{-\kappa}
\]
for some \(C_{\mathrm{quad}}>0\) and \(\kappa>0\). 
\begin{assum}[Regularity of the controlled dynamics]
\label{assume:controlled_dynamics}
For every $\xi\in\Xi$, the controlled one-step map
$\mathbf{F}(\mathbf{z},\mathbf{v},\xi)$ is locally Lipschitz
continuous in $(\mathbf{z},\mathbf{v})$ on the compact
state--input region considered by the SMPC problem.
\end{assum}
\end{assum}
\begin{assum}[Stochastic polynomial regularity]
\label{assume:pce_regular}
Assume that the exact stochastic solution admits the
orthogonal expansion
\[
w(\boldsymbol{x},t,\xi)
=
\sum_{\alpha\in\mathbb{N}_0^M}
c_\alpha(\boldsymbol{x},t)\Psi_\alpha(\xi),
\]
and that, for some \(r>0\),
\[
\|w\|_{\mathcal{H}^r_\Psi(\Xi;L^2(D))}^2
:=
\sum_{\alpha\in\mathbb{N}_0^M}
(1+|\alpha|)^{2r}
\|c_\alpha\|_{L^2(D)}^2
<\infty .
\]
This assumption states that the solution map
\(\xi\mapsto w(\cdot,\cdot,\xi)\) has \(r\)-order stochastic regularity in the
coefficient sense associated with the chosen orthonormal polynomial basis.
For standard Legendre or Hermite polynomial families, this coefficient-space regularity is the spectral analog of weighted Sobolev regularity in the
random variable.
\end{assum}

\begin{theorem}\label{lem:pce_truncation}
Suppose Assumptions~\ref{assume:pinn_accuracy}--\ref{assume:number_2} hold. Let
$w \in L^2(\Omega \times [0,T]; H^r(\Xi))$ for $r>0$ be the exact solution to
\eqref{eq:pde_general}, where $\Omega\subset\mathbb{R}^d$ is bounded and
$\Xi$ is an $M$-dimensional probability space with measure $\mathrm{d}\mu(\xi)$
supporting an orthonormal polynomial basis $\{\Psi_\alpha(\xi)\}_{|\alpha|\le P}$. 
Let $\widehat{w}_P$ be the PINN--PCE surrogate, as shown in \eqref{eq:pinn_pce_surrogate}.
Then there exists a constant $C_r = C(r,\|w\|)>0$, independent of $P,N_s$ and $N_\theta$, such that
\begin{equation}\label{eq:pce_error}
\|w-\widehat{w}_P\|
\le C_r P^{-r}
+\sqrt{2\binom{M+P}{P}}\Big[ C_{\mathrm{quad}}N_s^{-\kappa} + M_\alpha E_p\Big],
\end{equation}
where
$M_\alpha := \max_{1\le j\le N_s} |\omega_j \Psi_\alpha(\xi^{(j)})|$,
$E_p$ is the PINN error bound from Assumption~\ref{assume:pinn_accuracy}$; $
$C_{\mathrm{quad}}$ and $\kappa$ depend on the accuracy of the quadrature and the regularity of the integrand from Assumption~\ref{assume:number_2}.
\end{theorem}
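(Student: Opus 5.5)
The plan is to split the error at the exact truncated PCE,
\[
w_P(\boldsymbol{x},t,\xi):=\sum_{|\alpha|\le P}c_\alpha(\boldsymbol{x},t)\Psi_\alpha(\xi),
\]
and to interpret the norm as $L^2(D\times\Xi)$ with $D=\Omega\times[0,T]$. The triangle inequality then gives
\[
\|w-\widehat{w}_P\|\le \|w-w_P\|+\|w_P-\widehat{w}_P\|.
\]
The first term is the truncation error and the second is the coefficient (projection) error. Each will be bounded separately.

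For the truncation term, I will use Parseval's identity for the orthonormal basis $\{\Psi_\alpha\}$ together with Assumption~\ref{assume:pce_regular}:
\[
\|w-w_P\|^2=\sum_{|\alpha|>P}\|c_\alpha\|_{L^2(D)}^2\le (1+P)^{-2r}\sum_{|\alpha|>P}(1+|\alpha|)^{2r}\|c_\alpha\|_{L^2(D)}^2 .
\]
This yields $C_rP^{-r}$ with $C_r=\|w\|_{\mathcal{H}^r_\Psi}$, a constant independent of $P$, $N_s$ and $N_\theta$. The statement literally cites only Assumptions~\ref{assume:pinn_accuracy}--\ref{assume:number_2}, so I will invoke Assumption~\ref{assume:pce_regular} as the spectral form of the $H^r(\Xi)$ regularity hypothesis.

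For the coefficient term, orthonormality again gives
\[
\|w_P-\widehat{w}_P\|^2=\sum_{|\alpha|\le P}\|c_\alpha-\widehat{c}_\alpha\|_{L^2(D)}^2 .
\]
I will then fix $\alpha$ and insert the quadrature applied to the exact solution:
\begin{align*}
c_\alpha-\widehat{c}_\alpha
&=\Bigl[\int_\Xi w\Psi_\alpha\,d\mu-\sum_{j}\omega_j w(\cdot,\xi^{(j)})\Psi_\alpha(\xi^{(j)})\Bigr]\\
&\quad+\sum_j\omega_j\Psi_\alpha(\xi^{(j)})\bigl[w(\cdot,\xi^{(j)})-\widehat{w}(\cdot,\xi^{(j)};\theta^\star)\bigr].
\end{align*}
The first bracket is at most $C_{\mathrm{quad}}N_s^{-\kappa}$ in $L^2(D)$ by Assumption~\ref{assume:number_2}. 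The second is bounded using Assumption~\ref{assume:pinn_accuracy}, which gives $E_p$ per node, together with $|\omega_j\Psi_\alpha(\xi^{(j)})|\le M_\alpha$. Next I apply $(a+b)^2\le 2(a^2+b^2)$, sum over the $N_P=\binom{M+P}{P}$ retained indices, and take square roots, which is where the factor $\sqrt{2\binom{M+P}{P}}$ comes from. A cruder route is $\sqrt{2N_P}\,(a+b)$ with $a^2+b^2\le (a+b)^2$, and it gives the displayed form directly. Combining the two estimates finishes the proof.

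The main obstacle is the PINN term. A naive triangle inequality over the nodes gives $\sum_j|\omega_j\Psi_\alpha(\xi^{(j)})|\,E_p\le N_sM_\alpha E_p$, not $M_\alpha E_p$. To avoid the factor $N_s$, I will try to absorb the node sum into the definition of $M_\alpha$, i.e.\ read $M_\alpha$ as $\max_\alpha\sum_j|\omega_j\Psi_\alpha(\xi^{(j)})|$. Alternatively, Cauchy--Schwarz with $\sum_j\omega_j=1$ gives the bound $\bigl(\sum_j\omega_j\Psi_\alpha(\xi^{(j)})^2\bigr)^{1/2}E_p$, and this quantity is about $1$ when the rule integrates $\Psi_\alpha^2$ exactly. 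Whichever reading is taken, I will state it explicitly and keep the maximum over $\alpha$ so that a single constant $M_\alpha$ serves all retained indices.
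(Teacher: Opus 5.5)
Your proposal follows the paper's proof step for step. You split at $w_P$, use Parseval plus Assumption~\ref{assume:pce_regular} for the tail (the paper uses that assumption too, although the statement cites only Assumptions~\ref{assume:pinn_accuracy}--\ref{assume:number_2}), and reduce by orthonormality to $\sum_{|\alpha|\le P}\|c_\alpha-\widehat c_\alpha\|_{L^2(D)}^2$. The quadrature/PINN splitting and the use of $(a+b)^2\le 2a^2+2b^2$ are also the same.

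The one place you diverge is the obstacle you flag, and there you are right and the paper is not. The paper gets $M_\alpha E_p$ by writing every quadrature sum in \eqref{eq:coefficient_error}--\eqref{eq:pinn_pce_error_26} with a prefactor $1/N_s$, so that $\frac{1}{N_s}\sum_j|\omega_j\Psi_\alpha(\xi^{(j)})|\,E_p\le M_\alpha E_p$. That prefactor appears neither in the definition \eqref{eq:projection} of $\widehat c_\alpha$ nor in the quadrature estimate of Assumption~\ref{assume:number_2}. With weights summing to one, the scaled sum is not a quadrature for $\int_\Xi w\Psi_\alpha\,d\mu$. Making it one requires multiplying the weights by $N_s$, which multiplies $M_\alpha$ by $N_s$ and restores exactly the factor you found.

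In fact, no argument can deliver the literal constant $M_\alpha=\max_j|\omega_j\Psi_\alpha(\xi^{(j)})|$. Take the paper's Burgers setting: $M=1$, $P=3$, $N_s=5$ Gauss--Legendre, orthonormal Legendre basis. Let the PINN differ from $w$ by the same function $\varepsilon\phi$, with $\|\phi\|_{L^2(D)}=1$, at every node.
\begin{itemize}
\item The rule is exact through degree $9$, so $\sum_j\omega_j\Psi_\alpha(\xi^{(j)})=\delta_{\alpha 0}$. The entire nodal error therefore enters $\widehat c_0$ undamped, and $\|w-\widehat w_P\|\ge \varepsilon-K$ with $K$ independent of $\varepsilon$.
\item Meanwhile $E_p=\varepsilon$ and $\max_{|\alpha|\le 3,\,j}|\omega_j\Psi_\alpha(\xi^{(j)})|\approx 0.318$. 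The right-hand side of \eqref{eq:pce_error} is then at most $K'+2\sqrt{2}\cdot 0.318\,\varepsilon\approx K'+0.90\,\varepsilon$, so the inequality fails for large $\varepsilon$.
\end{itemize}

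So do not absorb the node sum silently; state the modified theorem explicitly. Both of your options are sound.
\begin{itemize}
\item The constant $\max_{|\alpha|\le P}\sum_j|\omega_j\Psi_\alpha(\xi^{(j)})|$ needs no sign condition on the weights.
\item The Cauchy--Schwarz version needs $\omega_j\ge 0$, which holds for Gauss rules. It then equals exactly $1$ whenever the rule integrates $\Psi_\alpha^2$ exactly. This gives the clean bound $C_rP^{-r}+\sqrt{2\binom{M+P}{P}}\,\bigl[C_{\mathrm{quad}}N_s^{-\kappa}+E_p\bigr]$.
\end{itemize}

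Your sum-of-squares bookkeeping is also what actually produces the factor $\sqrt{N_P}$. The paper's intermediate bounds $\epsilon_q\le C_{\mathrm{quad}}N_PN_s^{-\kappa}$ and $\epsilon_p\le M_\alpha E_pN_P$ would only give $\sqrt{2}\,N_P$.
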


\begin{proof} 
Using the triangle inequality,
\begin{align}
\label{eq:lemma_3_1_pinn_1}
\| w - \widehat{w}_P \| \leq \| w - w_P \| + \| w_P - \widehat{w}_P \|.
\end{align}
The PCE truncation error is
\begin{align}
w - w_P = \sum_{|\alpha| > P} c_\alpha(\boldsymbol{x}, t) \Psi_\alpha(\xi).
\end{align} 
Taking the $L^2$-norm of both sides,
\begin{align}
&\| w - w_P \|^
= \int_{D \times \Xi} \Bigg| \sum_{|\alpha| > P} c_\alpha(\boldsymbol{x}, t) \Psi_\alpha(\xi) \Bigg|^2 \, \mathrm{d}\boldsymbol{x} \, \mathrm{d}t \, \mathrm{d}\mu(\xi).
\end{align} 
Using the orthonormality of the basis functions, $\int_{\Xi} \Psi_\alpha \Psi_\beta \, \mathrm{d}\mu = \delta_{\alpha\beta}$, 
\begin{align}
\label{eq:lemma_3_1_pinn_2}
\int_{\Xi} \Bigg| \sum_{|\alpha| > P} c_\alpha(\boldsymbol{x}, t) \Psi_\alpha(\xi) \Bigg|^2 \, \mathrm{d}\mu(\xi) = \sum_{|\alpha| > P} c_\alpha(\boldsymbol{x}, t)^2.
\end{align}
Thus,
\begin{alignat}{2}
\| w - w_P \|^2 
&= \int_{D} \sum_{|\alpha| > P} c_\alpha(\boldsymbol{x}, t)^2 \, \mathrm{d}\boldsymbol{x} \, \mathrm{d}t \nonumber \\
& = \sum_{|\alpha| > P} \| c_\alpha \|_{L^2(D)}^2.
\end{alignat}
Also, 
\begin{align}
\sum_{|\alpha|>P}\|c_\alpha\|_{L^2(D)}^2
&=
\sum_{|\alpha|>P}
(1+|\alpha|)^{-2r}
(1+|\alpha|)^{2r}
\|c_\alpha\|_{L^2(D)}^2 \nonumber\\
&\le
(1+P)^{-2r}
\sum_{|\alpha|>P}
(1+|\alpha|)^{2r}
\|c_\alpha\|_{L^2(D)}^2 \nonumber\\
&\le
(1+P)^{-2r}
\|w\|_{\mathcal{H}^r_\Psi(\Xi;L^2(D))}^2 ,
\end{align}
where the last line follows from Assumption~\ref{assume:pce_regular}. Writing $C_r:=\|w\|_{\mathcal{H}^r_\Psi(\Xi;L^2(D))}$,
\begin{align}
\|w-w_P\|_{L^2(D\times\Xi)}
\le
C_r(1+P)^{-r}.
\end{align}
For simplicity, this is written as \(C_rP^{-r}\) for \(P\ge1\).
Therefore,
\begin{align}\label{eq:lemma_3_1_pinn_3}
\| w - w_P \| \leq C_r P^{-r}.
\end{align}
The difference between the truncated PCE and the PINN-PCE surrogate in \eqref{eq:lemma_3_1_pinn_1} is given by
\begin{align}
w_P - \widehat{w}_P = \sum_{|\alpha| \leq P} (c_\alpha(\boldsymbol{x}, t) - \widehat{c}_\alpha(\boldsymbol{x}, t)) \Psi_\alpha(\xi).
\label{eq:21}
\end{align}
where $\widehat{c}_\alpha(\boldsymbol{x}, t)$ follows from \eqref{eq:projection}. Taking the $L^2$-norm of \eqref{eq:21},
\begin{align}\label{eq:lemma_3_1_pinn_4}
\| w_P - \widehat{w}_P \|^2 = \sum_{|\alpha| \leq P} \left\| c_\alpha - \widehat{c}_\alpha \right\|_{L^2(D)}^2,
\end{align}
since $\int_{\Xi} \Psi_\alpha^2 \, \mathrm{d}\mu = 1$ for orthonormal bases. Using \eqref{eq:pce} and \eqref{eq:pinn_pce_surrogate},
\begin{align}
\label{eq:coefficient_error}
&c_\alpha - \widehat{c}_\alpha = \int_{\Xi} w\Psi_\alpha \, \mathrm{d}\mu - \frac{1}{N_s} \sum_{j=1}^{N_s} \omega_j\widehat{w}(\cdot, \xi^{(j)}; \theta^*) \Psi_\alpha(\xi^{(j)})
\end{align}
The right-hand side of \eqref{eq:coefficient_error} can be rewritten as
\begin{alignat}{2}
\label{eq:coefficient_error_rearranged}
&\int_{\Xi} w\Psi_\alpha \, \mathrm{d}\mu 
- \frac{1}{N_s}\sum_{j=1}^{N_s} \omega_j w (\cdot,\xi^{(j)})\Psi_\alpha(\xi^{(j)}) \nonumber\\
&+ \frac{1}{N_s}\sum_{j=1}^{N_s} \omega_j \Big[ w(\cdot,\xi^{(j)})-\widehat{w}(\cdot,\xi^{(j)};\theta^{*}) \Big] \Psi_\alpha(\xi^{(j)}).
\end{alignat}
The quadrature error in approximating $w\Psi_\alpha$ with $N_s$ quadrature points is
   \begin{align}
   \label{eq:pinn_pce_error_26}
    &e_{q,\alpha}(\boldsymbol{x}, t) = \int_{\Xi} w \Psi_\alpha \, \mathrm{d}\mu
    - \frac{1}{N_s} \sum_{j=1}^{N_s} \omega_j w(\cdot, \xi^{(j)}) \Psi_\alpha(\xi^{(j)}).
    \end{align}
The PINN approximation error in estimating $w(\cdot,\xi^{(j)})$ is
    \begin{align}
    \label{eq:pinn_pce_error_27}
    e_{p,j}(\boldsymbol{x}, t)
    =   w(\cdot, \xi^{(j)}) - \widehat{w}(\cdot, \xi^{(j)}; \theta^*).
   \end{align}
Using definitions \eqref{eq:pinn_pce_error_26}--\eqref{eq:pinn_pce_error_27},  \eqref{eq:coefficient_error} becomes
\begin{align}
c_\alpha - \widehat{c}_\alpha = e_{q,\alpha} + \frac{1}{N_s}\sum_{j=1}^{N_s} \omega_j e_{p,j}\Psi_\alpha(\xi^{(j)}).
\end{align}
Taking the $L^2$-norm of both sides,
\begin{align}
&\left\| c_\alpha - \widehat{c}_\alpha \right\|_{L^2(D)}^2 \nonumber\\
&= \int_{D} \left|e_{q,\alpha} + \frac{1}{N_s}\sum_{j=1}^{N_s} \omega_j e_{p,j}\Psi_\alpha(\xi^{(j)}) \right|^2 \, \mathrm{d}\boldsymbol{x} \, \mathrm{d}t.
\end{align}
Using $(a + b)^2 \leq 2a^2 + 2b^2$,
\begin{equation}
\left\| c_\alpha - \widehat{c}_\alpha \right\|_{L^2(D)}^2 
\leq 2 \left\| e_{q,\alpha} \right\|_{L^2(D)}^2 + \frac{2}{N_s^2} \left\| \sum_{j=1}^{N_s} \omega_j e_{p,j}\Psi_\alpha(\xi^{(j)})  \right\|_{L^2(D)}^2.
\end{equation}
Using \eqref{eq:lemma_3_1_pinn_4},
\begin{align}
\label{eq:before_sqrt}
&\left\| w_P - \widehat{w}_P \right\|^2 \leq 
2 \sum_{|\alpha| \leq P} \left\| e_{q,\alpha} \right\|_{L^2(D)}^2\nonumber\\
&\qquad \qquad \qquad \  + \frac{2}{N_s^2} \sum_{|\alpha| \leq P} \left\| \sum_{j=1}^{N_s} \omega_j e_{p,j}\Psi_\alpha(\xi^{(j)})  \right\|_{L^2(D)}^2
\end{align}
Writing $\epsilon_q^2 = \sum_{|\alpha| \leq P} \left\| e_{q,\alpha} \right\|_{L^2(D)}^2$ and 
$\epsilon_{p}^2 = \frac{1}{N_s^2}\sum_{|\alpha| \leq P} \left\| \sum_{j=1}^{N_s} \omega_j e_{p,j}\Psi_\alpha(\xi^{(j)})  \right\|_{L^2(D)}^2$, and taking the square root of \eqref{eq:before_sqrt} gives
\begin{equation}
\label{eq:error_polynomial_versus_pinnpoly}
\left\| w_P - \widehat{w}_P \right\| \leq \sqrt{2 \epsilon_q^2 + 2 \epsilon_{p}^2}
\leq \sqrt{2} \left( \epsilon_q + \epsilon_{p}\right).
\end{equation}

Combining \eqref{eq:lemma_3_1_pinn_1}, \eqref{eq:lemma_3_1_pinn_3}, and \eqref{eq:error_polynomial_versus_pinnpoly}, we get
\begin{alignat}{2}
    \label{eq:final_error}
    \left\| w - \widehat{w}_P \right\| \leq C_sP^{-s} + \sqrt{2}(\epsilon_q+\epsilon_p).
\end{alignat}
Under Assumption~\ref{assume:number_2},
$w\Psi_\alpha$ is sufficiently regular for the quadrature estimate to hold \cite{mishraestimates2021}, implying
\begin{alignat}{2}
\label{eq:quadrature_error}
|e_{q,\alpha}(\boldsymbol{x},t)| \le C_{\mathrm{quad}}\,N_s^{-\kappa}, \quad \kappa>0.
\end{alignat}
Since the total number of terms in the PCE is $N_P$, \eqref{eq:quadrature_error} implies that
\begin{alignat}{2}
    \label{eq:total_quadrature_error}
    \epsilon_q \leq C_{\mathrm{quad}} N_P N_s^{-\kappa}
\end{alignat}
From Assumption~\ref{assume:pinn_accuracy}, the PINN approximation error satisfies 
$\lvert e_{p,j}(\boldsymbol{x},t)\rvert \le E_p$.
Let $M_{\alpha}=\max_{1\le j\le N_s}{|\omega_j\Psi_\alpha(\xi^{(j)})|}$. Then, $\epsilon_p$ is bounded by
\begin{alignat}{2}
    \label{eq:total_pinn_error}
    \epsilon_p \leq M_\alpha E_p N_P
\end{alignat}
Combining \eqref{eq:final_error}, \eqref{eq:total_quadrature_error}, and \eqref{eq:total_pinn_error}, and substituting $N_P$ from \eqref{eq:number_of_terms} gives
\begin{alignat}{2}
    \label{eq:total_error_simplified}
    \left\| w - \widehat{w}_P \right\| \leq C_rP^{-r} + \sqrt{2\binom{M+P}{P}}\left[C_{\mathrm{quad}}N_s^{-\kappa} + M_\alpha E_p\right].
\end{alignat}
As $N_s \to \infty$, the quadrature error $\to 0$ and, with improved PINN training, $E_p \to 0$. This completes the proof.
\end{proof}

\begin{remark}[Significance of Assumption~\ref{assume:pce_regular}]
The decay rate \(C_r P^{-r}\) is not a consequence of orthogonality alone. Orthogonality only gives Parseval's identity for the PCE coefficients. The algebraic decay follows from the additional regularity of the solution map. If the parameter-to-solution map is non-smooth, discontinuous, or close to a bifurcation, the rate may deteriorate, and the bound established in \eqref{eq:total_error_simplified} should be interpreted as an assumption rather than an automatic property.
\end{remark}

Here, we briefly comment on the validity of Assumption~\ref{assume:pce_regular} for
each benchmark used in Section~\ref{sec:case_study}. For the viscous Burgers' equation with \(\nu\in\mathbb{U}[0.008,0.012]\), the parameter-to-solution map
\(\nu\mapsto u(\cdot,\cdot,\nu)\) is analytic on the interval \([0.008,0.012]\) for the chosen time horizon, since positive viscosity prevents shock formation and the initial data are smooth. For the KdV equation with
dispersion \(\nu\in\mathbb{U}[0.8,1.2]\), the dispersion term is strictly positive, and the two-soliton initial condition is smooth, so standard well-posedness results for KdV give smooth dependence on \(\nu\). The Navier--Stokes case study in Section~\ref{sec:case3} is run with fixed viscosity and therefore does not satisfy Assumption~\ref{assume:pce_regular}; it is used as a numerical scalability test. In each case, the assumed
regularity index \(r\) is not identified analytically; instead, the algebraic decay of \(\|c_\alpha\|\) is confirmed empirically using numerical results.

The next result, Corollary~\ \ref{cor:pinn_pce_l1_error}, provides error bounds for the PINN-PCE surrogate in $L^1$ norm, building on results from Theorem~\ref{lem:pce_truncation}. This result ensures the reliability of the PINN-PCE method in practical applications, where understanding the tradeoff between computational cost and approximation accuracy is essential.
\begin{cor}[Induced \(L^1\) error bound]
\label{cor:pinn_pce_l1_error}
Suppose the conditions of Theorem~\ref{lem:pce_truncation} hold and
\(\mathcal{Y}=D\times\Xi\) has finite measure. Then
\begin{align}
\label{eq:pinn_pce_l1_error}
\left\| w-\widehat{w}_{P} \right\|_{L^1(\mathcal{Y})}
\le
\sqrt{\mu(\mathcal{Y})}
\left\| w-\widehat{w}_{P} \right\|_{L^2(\mathcal{Y})}.
\end{align}
Consequently, if \(P\to\infty\), \(N_s\to\infty\), and the validation error \(E_p\to 0\) along the trained PINN sequence, then the PINN--PCE surrogate converges in both \(L^2\) and \(L^1\).
\end{cor}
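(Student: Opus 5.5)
The inequality \eqref{eq:pinn_pce_l1_error} is a direct application of the Cauchy--Schwarz inequality on the finite-measure space \(\mathcal{Y}=D\times\Xi\). Here \(\mathcal{Y}\) carries the product measure \(\mathrm{d}\boldsymbol{x}\,\mathrm{d}t\,\mathrm{d}\mu(\xi)\), and \(\mu(\mathcal{Y})\) denotes its total mass, equal to \(|\Omega|\,T\) since \(\mu\) is a probability measure. I will write \(e:=w-\widehat{w}_P\) and note that \(e\in L^2(\mathcal{Y})\). This holds because \(w\in L^2(D;H^r(\Xi))\subset L^2(\mathcal{Y})\) by Assumption~\ref{assume:number_2}, and \(\widehat{w}_P\) is a finite sum of products \(\widehat{c}_\alpha\Psi_\alpha\), where each \(\Psi_\alpha\in L^2_\mu\) is orthonormal and each \(\widehat{c}_\alpha\) is a finite quadrature combination of PINN evaluations. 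Those evaluations lie in \(L^2(D)\), since Assumption~\ref{assume:pinn_accuracy} gives them finite \(L^2(D)\) distance to \(w(\cdot,\xi^{(j)})\).

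The key step is the pointwise factorization \(|e|=1\cdot|e|\). Cauchy--Schwarz then gives
\begin{equation*}
\int_{\mathcal{Y}}|e|\,\mathrm{d}\mathcal{Y}
\le
\Bigl(\int_{\mathcal{Y}}1\,\mathrm{d}\mathcal{Y}\Bigr)^{1/2}
\Bigl(\int_{\mathcal{Y}}|e|^2\,\mathrm{d}\mathcal{Y}\Bigr)^{1/2}
=\sqrt{\mu(\mathcal{Y})}\,\|e\|_{L^2(\mathcal{Y})},
\end{equation*}
which is exactly \eqref{eq:pinn_pce_l1_error}. Finiteness of \(\mu(\mathcal{Y})\) is the only hypothesis used at this stage, and it also shows that \(e\in L^1(\mathcal{Y})\).

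For the convergence statement, I will take the right-hand side of \eqref{eq:pce_error} from Theorem~\ref{lem:pce_truncation} and show it tends to zero. The truncation term \(C_rP^{-r}\to0\) as \(P\to\infty\), because \(C_r\) is independent of \(P\), \(N_s\) and \(N_\theta\). The remaining term carries the factor \(\sqrt{2\binom{M+P}{P}}\), which grows with \(P\). So the main obstacle is that the limits cannot be taken independently.

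I would resolve this by stating convergence along a coupled refinement. For each \(P\), I choose \(N_s=N_s(P)\) and the PINN training so that
\begin{equation*}
\sqrt{\tbinom{M+P}{P}}\,\bigl(C_{\mathrm{quad}}N_s^{-\kappa}+M_\alpha E_p\bigr)\to0 .
\end{equation*}
This is possible because \(\binom{M+P}{P}=\mathcal{O}(P^M)\) for fixed \(M\), so it suffices that \(N_s^{-\kappa}\) and \(M_\alpha E_p\) decay faster than \(P^{-M/2}\). For \(M_\alpha\), I would note that for Gaussian-type rules with \(\sum_j\omega_j=1\) it stays bounded on the retained basis. Otherwise I would simply include it in the coupling condition.

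Under this coupling, \(\|e\|_{L^2(\mathcal{Y})}\to0\). The displayed inequality then gives \(\|e\|_{L^1(\mathcal{Y})}\to0\) with the same rate up to the fixed constant \(\sqrt{\mu(\mathcal{Y})}\), completing the argument.
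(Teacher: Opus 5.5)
Your Cauchy--Schwarz step on the finite-measure space $\mathcal{Y}$, writing $|e|=1\cdot|e|$, is exactly the paper's proof of the inequality. The difference is in the convergence clause.

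The paper disposes of it in one line (``follows from Theorem~\ref{lem:pce_truncation}, provided that $E_p\to0$''). That implicitly treats $P\to\infty$, $N_s\to\infty$ and $E_p\to0$ as independent limits. You correctly notice that the prefactor $\sqrt{2\binom{M+P}{P}}=\mathcal{O}(P^{M/2})$ in \eqref{eq:pce_error} grows. The bound therefore only tends to zero along a coupled refinement in which $C_{\mathrm{quad}}N_s^{-\kappa}$ and $M_\alpha E_p$ are $o(P^{-M/2})$. Your conclusion is formally more restricted than the sentence as written, but it is what the theorem actually supports; the paper's one-liner does not justify the uncoupled version.

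Your claim that $M_\alpha$ stays bounded needs the rule to be exact on degree $2P$ (for Gauss rules, $N_s\ge P+1$ nodes per direction). Then $\omega_j\Psi_\alpha(\xi^{(j)})^2\le\sum_k\omega_k\Psi_\alpha(\xi^{(k)})^2=1$, hence $M_\alpha\le\max_j\sqrt{\omega_j}\le1$.

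The same exactness makes $\{\Psi_\alpha\}_{|\alpha|\le P}$ orthonormal in the discrete inner product. With the coefficients formed as in \eqref{eq:projection}, Bessel's inequality then gives $\sum_{|\alpha|\le P}\bigl\|\sum_j\omega_j e_{p,j}\Psi_\alpha(\xi^{(j)})\bigr\|_{L^2(D)}^2\le\sum_j\omega_j\|e_{p,j}\|_{L^2(D)}^2\le E_p^2$. So if you are willing to sharpen the theorem's estimate, the PINN term carries no $\sqrt{N_P}$ growth at all. Only the quadrature term would then need your rate coupling.
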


\begin{proof}
The result follows directly from the Cauchy--Schwarz inequality:
\begin{align}
&\|w-\widehat{w}_{P}\|_{L^1(\mathcal{Y})}
=
\int_{\mathcal{Y}} |w-\widehat{w}_{P}|\,d\mu\nonumber\\
&\le
\left(\int_{\mathcal{Y}}1\,d\mu\right)^{1/2}
\left(\int_{\mathcal{Y}}|w-\widehat{w}_{P}|^2\,d\mu\right)^{1/2}.
\end{align}
The convergence statement then follows from Theorem~\ref{lem:pce_truncation}, provided that the computable PINN validation error \(E_p\) tends to zero.
\end{proof}

\section{Stochastic Model Predictive Control}
\label{sec:smpc}
This section explains how the PINN--PCE surrogate is converted into the finite-dimensional moment prediction model used by the SMPC optimizer. The key step is to propagate the PCE coefficients of the spatially discretized state under the controlled dynamics and then compute the predicted mean and covariance from these coefficients.

To formulate the SMPC problem, we discretize \(\Omega\) using a spatial grid with \(n\) points \(\{\boldsymbol{x}_i\}_{i=1}^n\), and time using uniform steps of size $\Delta t$ such that \(t_k=k\Delta t\), \(k=0,1,\ldots,K\). The spatially discretized state at $t=t_k$ is
\[
\mathbf{z}_k(\xi)
=
\begin{bmatrix}
w(\boldsymbol{x}_1,t_k,\xi)&
\cdots&
w(\boldsymbol{x}_n,t_k,\xi)
\end{bmatrix}^{\top}.
\]
The controlled discrete-time dynamics are written as
\begin{equation}
\mathbf{z}_{k+1}(\xi)
=
\mathbf{F}(\mathbf{z}_k(\xi),\mathbf{v}_k,\xi),
\label{eq:controlled_discrete_dynamics}
\end{equation}
where \(\mathbf{v}_k\in\mathcal{U}\) is the control input, and \(\mathbf{F}\) is the one-step transition map obtained from the spatial and temporal discretization of the controlled PDE.

The PINN-PCE surrogate approximates the $i^{\text{th}}$ component of the discrete state \(\mathbf{z}_k(\xi)\) at each time step $k$ as
\begin{align}
\hat{z}_{P,ki}(\xi; \theta) = \sum_{|\alpha| \leq P} c_{\alpha,ki}(\theta) \Psi_\alpha(\xi),
\end{align}
where the coefficients $c_{\alpha,k i}$ are calculated using non-intrusive spectral projection, as illustrated in \eqref{eq:projection}. The state vector is thus approximated as $\hat{\mathbf{z}}_{P,k}(\xi; \theta) = [\hat{z}_{P,k1}(\xi; \theta), \ldots, \hat{z}_{P,kn}(\xi; \theta)]^\top$. Thus, the mean and covariance of the state at time $t_k$ are computed from the PCE coefficients as follows
\begin{align}
\mu_{k|t} &= \mathbb{E}[\hat{\mathbf{z}}_{P,k}(\xi; \theta)] = [c_{0,k0}(\theta), \ldots, c_{0,kn}( \theta)]^\top, \\
\Sigma_{k|t} &= \mathbb{E}[(\hat{\mathbf{z}}_{P,k}(\xi; \theta) - \mu_{k|t})(\hat{\mathbf{z}}_{P,k}(\xi; \theta) - \mu_{k|t})^\top]
\end{align}

For a candidate control input \(\mathbf{v}_{k|t}\), the predicted state is evaluated at the quadrature nodes using the pre-trained PINN surrogate
\begin{alignat}{2}
&\widehat{\mathbf{z}}_{P,k+1|t}(\xi^{(j)};\theta)
=
\mathbf{F}
\left(
\widehat{\mathbf{z}}_{P,k|t}(\xi^{(j)};\theta),
\mathbf{v}_{k|t},
\xi^{(j)}
\right),
\nonumber \\ &j=1,\ldots,N_s .
\end{alignat}
The PCE coefficients of the next predicted state are computed by non-intrusive projection:
\begin{equation}
\widehat{\mathbf{c}}_{\alpha,k+1|t}
=
\sum_{j=1}^{N_s}
\omega_j
\widehat{\mathbf{z}}_{P,k+1|t}(\xi^{(j)};\theta)
\Psi_\alpha(\xi^{(j)}).
\end{equation}

The surrogate error bound from Theorem~\ref{lem:pce_truncation}, denoted by $\varepsilon_P$, is used as an additional tightening margin in the probabilistic state constraints, as established in the following proposition.

 \begin{pr}[Error-tightened chance constraint]
\label{prop:error_tightened_chance}
Suppose the PINN--PCE prediction error satisfies
\[
\|\mathbf{z}_{k+i|t}-\widehat{\mathbf{z}}_{k+i|t}\|
\le
\varepsilon_P,
\quad i=1,\ldots,N.
\]
If the tightened constraint
\[
C_j\boldsymbol{\mu}_{k+i|t}
+
\Phi^{-1}(1-\delta_j)
\sqrt{
C_j\boldsymbol{\Sigma}_{k+i|t}C_j^\top
}
+
\|C_j\|\varepsilon_P
\le
b_j
\]
holds for each constraint row \(C_j\), then the corresponding nominal Gaussian chance constraint is protected against all surrogate errors bounded by \(\varepsilon_P\).
\end{pr}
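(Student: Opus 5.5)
The plan is to make the informal phrase ``protected against all surrogate errors bounded by $\varepsilon_P$'' precise, and then reduce it to a deterministic shift combined with a Gaussian quantile bound. The precise target is
\[
\Pr\bigl[C_j\mathbf{z}_{k+i|t}\le b_j\bigr]\ge 1-\delta_j ,
\]
where the surrogate prediction $C_j\widehat{\mathbf{z}}_{k+i|t}$ is modeled as Gaussian with mean $C_j\boldsymbol{\mu}_{k+i|t}$ and variance $C_j\boldsymbol{\Sigma}_{k+i|t}C_j^\top$. The Gaussian assumption is exactly what ``nominal Gaussian chance constraint'' refers to, so I will state it explicitly as the interpretation being proved. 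The true state is then allowed to differ from the surrogate by any error of norm at most $\varepsilon_P$, as in the hypothesis.

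\textbf{Step 1: reduce the true constraint to a shifted surrogate constraint.} Write $\mathbf{e}=\mathbf{z}_{k+i|t}-\widehat{\mathbf{z}}_{k+i|t}$. By Cauchy--Schwarz, or more generally by the definition of the dual/operator norm consistent with the norm in the hypothesis,
\[
C_j\mathbf{z}_{k+i|t}\le C_j\widehat{\mathbf{z}}_{k+i|t}+\|C_j\|\varepsilon_P
\]
holds pointwise in $\xi$. Hence we have the event inclusion
\[
\{C_j\widehat{\mathbf{z}}_{k+i|t}\le b_j-\|C_j\|\varepsilon_P\}\subseteq\{C_j\mathbf{z}_{k+i|t}\le b_j\},
\]
and it suffices to show that the left-hand event has probability at least $1-\delta_j$.

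\textbf{Step 2: apply the Gaussian quantile bound.} Set $\sigma_j^2=C_j\boldsymbol{\Sigma}_{k+i|t}C_j^\top$. If $\sigma_j>0$, standardize to get
\[
\Pr\bigl[C_j\widehat{\mathbf{z}}\le b_j-\|C_j\|\varepsilon_P\bigr]=\Phi\!\left(\frac{b_j-\|C_j\|\varepsilon_P-C_j\boldsymbol{\mu}}{\sigma_j}\right).
\]
The tightened inequality says the argument of $\Phi$ is at least $\Phi^{-1}(1-\delta_j)$. Since $\Phi$ is monotone, the probability is at least $1-\delta_j$. If $\sigma_j=0$, the variable is deterministic and the tightened inequality gives the shifted constraint directly. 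Combining with Step 1 gives the claim for each $i=1,\dots,N$ and each row $j$.

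\textbf{Main obstacle.} The main difficulty is not analytic but definitional. The mean and covariance come from a PCE, so $\widehat{\mathbf{z}}$ is generally not Gaussian. The statement is therefore only a guarantee relative to the nominal Gaussian surrogate model. I will flag this, and note that replacing $\Phi^{-1}(1-\delta_j)$ by the Cantelli factor $\sqrt{(1-\delta_j)/\delta_j}$ would make the same two-step argument distribution-free. I will also note that the norm on $C_j$ must be the dual of the norm used in the error bound, which is the Euclidean norm in both places here.
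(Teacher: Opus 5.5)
Your proposal is correct and follows the paper's own argument. The paper's proof is exactly your Step~1 bound $|C_j(\mathbf{z}_{k+i|t}-\widehat{\mathbf{z}}_{k+i|t})|\le\|C_j\|\,\|\mathbf{z}_{k+i|t}-\widehat{\mathbf{z}}_{k+i|t}\|\le\|C_j\|\varepsilon_P$, used to justify adding the margin $\|C_j\|\varepsilon_P$ to the Gaussian reformulation; you additionally make explicit the event inclusion and the standardization and monotonicity-of-$\Phi$ step (including the degenerate case $\sigma_j=0$), which the paper leaves implicit in the phrase ``protected against all surrogate errors bounded by $\varepsilon_P$''.
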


\begin{proof}
For each constraint row \(C_j\),
\[
|C_j(\mathbf{z}_{k+i|t}-\widehat{\mathbf{z}}_{k+i|t})|
\le
\|C_j\|
\|\mathbf{z}_{k+i|t}-\widehat{\mathbf{z}}_{k+i|t}\|
\le
\|C_j\|\varepsilon_P .
\]
Therefore, adding the margin \(\|C_j\|\varepsilon_P\) to the deterministic reformulation of the chance constraint accounts for the worst-case surrogate prediction error. Hence, if the tightened constraint is satisfied by the PINN--PCE prediction, the corresponding untightened constraint is protected for any prediction error bounded by \(\varepsilon_P\).
\end{proof}

The SMPC problem optimizes the control sequence
\(\{\mathbf{v}_{k+i|t}\}_{i=0}^{N-1}\) over a prediction horizon \(N\), using the predicted PCE coefficients, means, and covariances. The optimization problem is
\begin{subequations}
\label{eq:smpc_pinn_pce_discrete}
\begin{align}
\min_{\{\mathbf{v}_{k+i|t}\}_{i=0}^{N-1}}
J
&=
\sum_{i=0}^{N-1}
\bigl(
\boldsymbol{\mu}_{k+i|t}^{\top}Q\boldsymbol{\mu}_{k+i|t}
+
\mathbf{v}_{k+i|t}^{\top}R\mathbf{v}_{k+i|t}
\nonumber\\
&+
\rho\,{\rm tr}(\boldsymbol{\Sigma}_{k+i|t})
\bigr)
\quad+
\boldsymbol{\mu}_{k+N|t}^{\top}P\boldsymbol{\mu}_{k+N|t},
\label{eq:smpc_cost}
\\
{\rm s.t.}\quad
\widehat{\mathbf{c}}_{\alpha,k+i+1|t}
&=
\sum_{j=1}^{N_s}
\omega_j
\widehat{\mathbf{z}}_{P,k+i+1|t}(\xi^{(j)};\theta)
\Psi_\alpha(\xi^{(j)}),
\label{eq:smpc_coeff_dyn}
\\
\boldsymbol{\mu}_{k+i+1|t}
&=
\widehat{\mathbf{c}}_{0,k+i+1|t},
\label{eq:smpc_mean_dyn}
\\
\boldsymbol{\Sigma}_{k+i+1|t}
&=
\sum_{0<|\alpha|\le P}
\widehat{\mathbf{c}}_{\alpha,k+i+1|t}
\widehat{\mathbf{c}}_{\alpha,k+i+1|t}^{\top},
\label{eq:smpc_cov_dyn}
\\
\mathbf{v}_{k+i|t}
&\in \mathcal{U},
\quad i=0,\ldots,N-1,
\label{eq:smpc_input}
\\
C_j\boldsymbol{\mu}_{k+i|t}
&+
\Phi^{-1}(1-\delta_j)
\sqrt{
C_j\boldsymbol{\Sigma}_{k+i|t}C_j^\top
}
+
\|C_j\|\varepsilon_P
\le b_j,
\nonumber\\
&\hspace{1cm}
j=1,\ldots,q,\quad i=1,\ldots,N,
\label{eq:smpc_chance_tightened}
\\
\boldsymbol{\mu}_{k|t}
&=
\boldsymbol{\mu}_{t},
\quad
\boldsymbol{\Sigma}_{k|t}
=
\boldsymbol{\Sigma}_{t}.
\label{eq:smpc_initial}
\end{align}
\end{subequations}


where $J$ is a quadratic cost with positive-definite weighting matrices $Q \succ 0$, $R \succ 0$, and $P \succ 0$ for the mean state, control input, and terminal cost, respectively. The positive constant $\rho$ penalizes the covariance trace to reduce the effects of uncertainty. The convex input constraint set $\mathcal{U} \subseteq \mathbb{R}^m$ enforces control bounds, e.g., $|\mathbf{v}_{k+i|t}| \leq v_{\text{max}}$. The chance constraint $\mathbb{P}[C \mathbf{z}_{k+i|t} \leq 0] \geq 1 - \delta$, where $C \in \mathbb{R}^{q \times n}$ defines state constraints, is reformulated assuming $\widehat{\mathbf{z}}_{P,k+i|t} \sim \mathcal{N}(\mu_{k+i|t}, \Sigma_{k+i|t})$, with $\Phi^{-1}(1 - \delta)$ as the inverse CDF of the standard normal distribution. The initial conditions $\mu_{k|t} = \mu_t$, $\Sigma_{k|t} = \Sigma_t$ are set based on the state estimate at time $t$.

The Gaussian assumption simplifies the chance constraint but may not hold for nonlinear PDEs. However, the $L^1$-norm bound \eqref{eq:pinn_pce_l1_error} in Corollary~\ref{cor:pinn_pce_l1_error} provides a conservative estimate of the error distribution, supporting robust constraint satisfaction.

\begin{remark}[First-order PCE limit]
If the dependence of the solution on \(\xi\) is exactly affine, then a first-order PCE represents the stochastic solution without truncation error. In that special case,
\[
w(t,\xi)=c_0(t)+\sum_{j=1}^{M}c_j(t)\xi_j,
\]
and the mean and covariance reduce to
\[
\mu(t)=c_0(t),\qquad
\Sigma(t)=\sum_{j=1}^{M}c_j(t)c_j(t)^\top .
\]
For nonlinear stochastic dependence, higher-order terms are generally required, and the truncation error is quantified by Theorem~\ref{lem:pce_truncation}.
\end{remark}




\begin{algorithm}[t]
\caption{PINN--PCE-based SMPC}
\label{alg:pinn_pce_smpc}
\begin{algorithmic}[1]
\State Train the PINN surrogate \(\widehat{w}(\boldsymbol{x},t,\xi;\theta^\star)\).
\State Compute PCE coefficients \(\widehat{c}_{\alpha}(\boldsymbol{x},t)\) by quadrature.
\State Form the spatially discretized surrogate state
\[
\widehat{\mathbf{z}}_k(\xi)
=
\sum_{|\alpha|\le P}
\widehat{\mathbf{c}}_{\alpha,k}\Psi_\alpha(\xi).
\]
\State Compute \(\boldsymbol{\mu}_k=\widehat{\mathbf{c}}_{0,k}\) and
\[
\boldsymbol{\Sigma}_k
=
\sum_{0<|\alpha|\le P}
\widehat{\mathbf{c}}_{\alpha,k}
\widehat{\mathbf{c}}_{\alpha,k}^{\top}.
\]
\For{\(k=0,1,\ldots\)}
    \State For each candidate input sequence, propagate the PCE coefficients using
\[
\widehat{\mathbf{c}}_{\alpha,k+i+1|k}
=
\sum_{s=1}^{N_s}
\omega_s
\mathbf{F}
\left(
\widehat{\mathbf{z}}_{k+i|k}(\xi^{(s)}),
\mathbf{v}_{k+i|k},
\xi^{(s)}
\right)
\Psi_\alpha(\xi^{(s)}).
\]
    \State Compute predicted means and covariances from the propagated coefficients.
    \State Solve the tightened SMPC problem.
    \State Apply the first optimal input \(\mathbf{v}_k^\star\).
    \State Measure or estimate the new state and repeat.
\EndFor
\end{algorithmic}
\end{algorithm}

\section{Case Studies}\label{sec:case_study}
We benchmark the proposed pipeline on three PDEs that span dispersive and convective-diffusive regimes: the 1D viscous Burgers', 1D Korteweg de Vries, and the 2D Navier--Stokes equations.

\begin{table}[t]
\centering
\caption{Scaling parameters used in the numerical studies. Burgers' and KdV exercise the full PINN--PCE--SMPC pipeline with a single uncertain parameter propagated by non-intrusive Gauss--Legendre projection. Navier--Stokes is used as a fixed-parameter scalability test of the SMPC design on 2D dynamics; extension to uncertain viscosity is discussed in Section~\ref{sec:conclusion}.}
\label{tab:scaling_parameters}
\begin{tabular}{lccccc}
\toprule
Case & Uncertain param. & \(M\) & \(P\) & \(N_P=\binom{M+P}{P}\) & \(N_s\) \\
\midrule
Burgers & \(\nu\sim\mathbb{U}[0.008,0.012]\) & 1 & 3 & 4 & 5 \\
KdV & \(\nu\sim\mathbb{U}[0.8,1.2]\) & 1 & 2 & 3 & 5 \\
\bottomrule
\end{tabular}
\end{table}

\subsection{Case Study I: Burgers' equation}
\label{sec:case2}
The viscous Burgers' equation models convective wave propagation and dissipation, with applications in traffic flow and shock waves. When the viscosity is uncertain, solving the equation becomes computationally intensive. The general form of the Burgers' equation \cite{MOWLAVI2023111731} with an input is
\begin{equation}
\label{eq:burgers}
\frac{\partial u}{\partial t} + u \frac{\partial u}{\partial x} = \nu \frac{\partial^2 u}{\partial x^2} + U(t), \quad x \in [0, L], \quad t \in [0, T],
\end{equation}
where $u(x,t)$ is the solution, $\nu \in \mathbb{U}[0.008, 0.012]$ is the uncertain viscosity which follows a uniform distribution, and $U(t)$ is the control input. $L = 4.0$ m, and $T = 5.0$ s represent our domain length and time horizon. The PDE is subject to periodic boundary conditions. In this configuration, and for $U=0$, the analytical solution of \eqref{eq:burgers} is
\begin{alignat}{2}
    \label{eq:burgers_analytical}
    u_{\text{an}} = \frac{2 \nu \pi \exp(-\pi^2\nu(t-5)) \sin(\pi x)}{2 + \exp(-\pi^2\nu(t-5)) \cos(\pi x)}.
\end{alignat}
For our simulations, we used the Gauss-Legendre quadrature rule with $N_s=5$ to sample $\nu$. The PINN consisted of three hidden layers with 60, 60, and 40 neurons respectively, using the $\tanh$ activation function, with $N_q=8000$, $N_{\text{IC}}=120$, and $N_{\text{BC}}=120$. Periodic boundary conditions were enforced via a loss term equating $u$ at $x = 0$ and $x = 4$. For the PCE reconstruction using the PINN solutions, Legendre basis functions of order $P = 3$ were used.

\begin{figure}
    \centering
    \includegraphics[width=
    \linewidth]{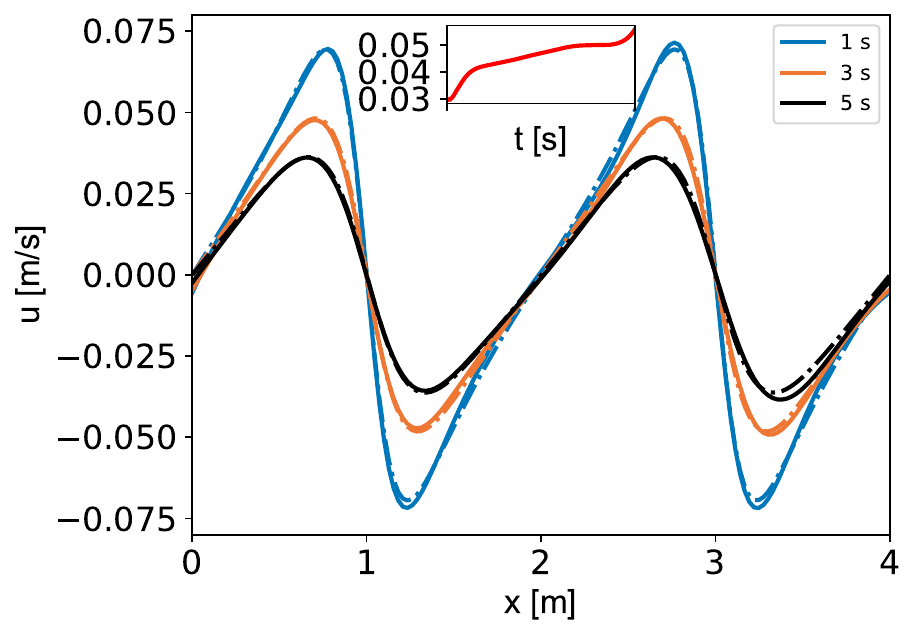}
    
\vspace{-0.3cm}
    \caption{Mean of the PINN-PCE solution (solid line) versus mean of the analytical solution (dotted line) at $t=1$, $t=3$, and $t=5$ s. The inset shows the time evolution of the $L^2$ norm of the relative error, which reaches a maximum of $\sim5.56\%$ at $t=5$ s.}
    \label{fig:Burgers_analysis}
\end{figure}
Figure~\ref{fig:Burgers_analysis} compares the mean of the analytical solution and the PINN-PCE solutions in space at $t=1$, $t=3$, and $t=5$ s. The inset of Figure~\ref{fig:Burgers_analysis} shows the evolution of $|(u-u_{\text{an}})/u_{\text{an}}|_2$ over time. The relative error stays within $6\%$ of the analytical solution for the considered time horizon. 

To empirically verify Assumption~\ref{assume:pce_regular} for this case, we compute the \(L^2(D)\) norm of \(\widehat{c}_\alpha(\boldsymbol{x},t)\) for \(|\alpha|=0,1,\ldots,10\) using a refined \(N_s=15\) Gauss--Legendre quadrature and plot them against \(|\alpha|+1\). Figure~\ref{fig:pce_decay} shows the resulting decay. The coefficient norms drop from \(2.4\times 10^{-1}\) at \(|\alpha|=0\) to \(1.9\times 10^{-6}\) at \(|\alpha|=10\), a decay of nearly five orders of magnitude, and a least-squares log-log fit on the tail \(|\alpha|\ge 2\) gives an empirical rate \(r_{\text{fit}}\approx 6.87\). This confirms the regularity assumed in Theorem~\ref{lem:pce_truncation} on this benchmark, and shows that the operational truncation order \(P=3\) used in Section~\ref{sec:case2} is well inside the regime in which the PCE truncation error is negligible compared to the PINN and quadrature contributions.

\begin{figure}[t]
    \centering
    \includegraphics[width=\linewidth]{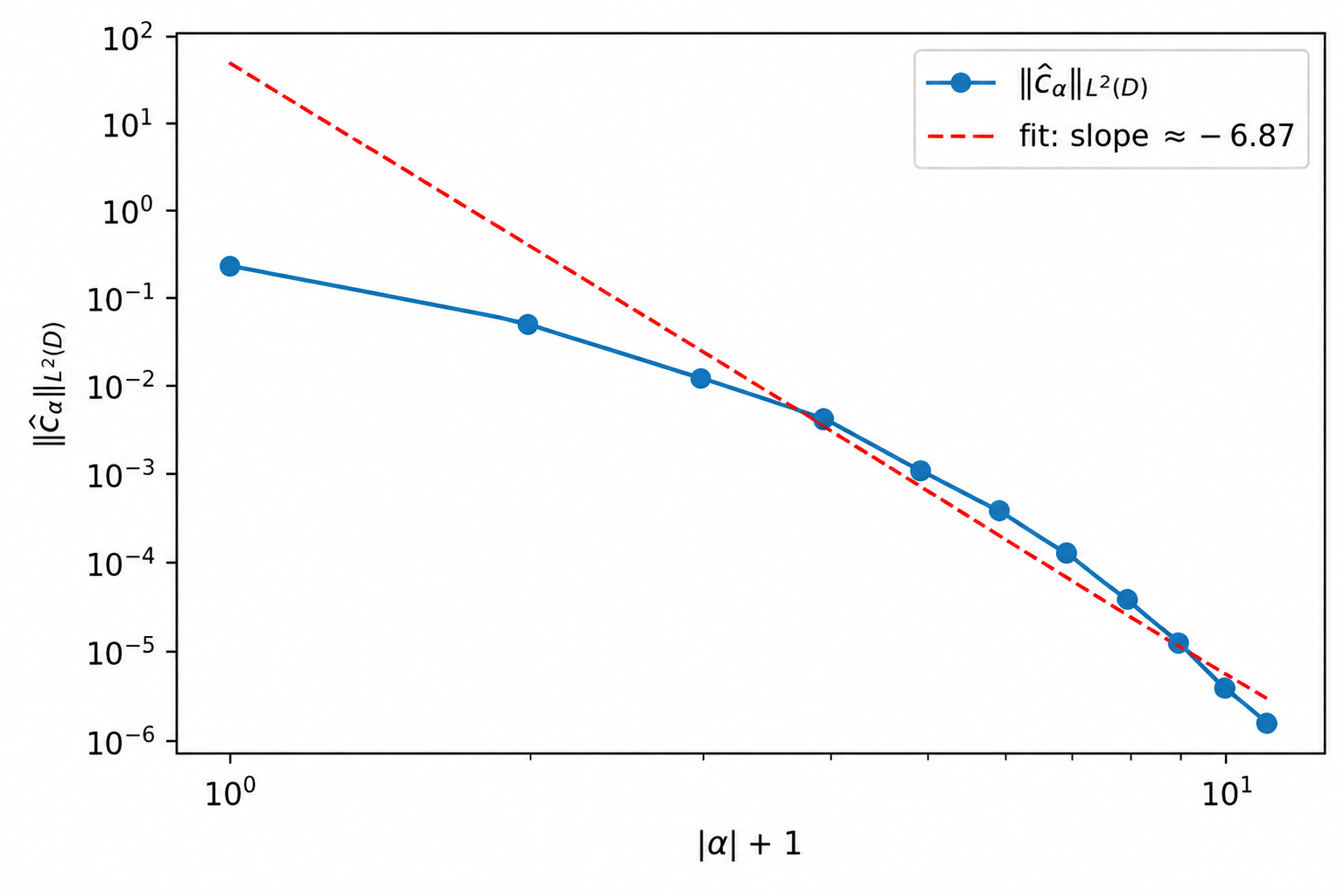}
    \vspace{-0.3cm}
    \caption{Empirical decay of the PCE coefficient norms \(\|\widehat{c}_\alpha\|_{L^2(D)}\) for the Burgers' case study, computed with \(N_s=15\) Gauss--Legendre quadrature. The dashed line is a least-squares fit on the tail \(|\alpha|\ge 2\), giving an empirical algebraic rate \(r_{\text{fit}}\approx 6.87\), consistent with Assumption~\ref{assume:pce_regular}.}
    \label{fig:pce_decay}
\end{figure}

Next, we implement an MPC strategy to track the reference trajectory
\begin{align}
   u_{\text{ref}} = 0.4 \sin\!\frac{3\pi x}{8} \!\left(\!0.9 + 0.3 \cos\!\frac{3\pi t}{50}\right)\!
   + 0.1 \sin\!\frac{3\pi t}{10}. 
\end{align}
The time domain is discretized into $250$ uniform intervals with $\Delta t = 0.02$. After every $4$ time steps, the optimization
\begin{alignat}{2}
\min_{U_{0:N-1}} &\sum_{i=0}^{N-1} \left( \mathbb{E} [\| u_{i|t} - u_{\text{ref}} \|^2] + \lambda U_{i|t}^2 \right) 
\end{alignat}
is solved, where $u_{i|t}$ is the predicted state from the PINN-PCE surrogate, $\lambda = 0.01$ penalizes control inputs, and $N = 100$ is the MPC prediction horizon. Training employed the AdamW optimizer with cosine annealing over 6000 epochs, balancing physics, initial, and boundary condition losses. 
Figure~\ref{fig:Burgers_control} plots the control input $U$ versus time. At $t=1$, $2$, and $3$ s, we also show the mean calculated trajectory along the primary vertical axis, and the relative error $|(u-u_{\text{ref}})/u|$ along the secondary vertical axis. The error remains less than $4\%$. To illustrate the computational speedup, we compare the PINN-PCE-MPC solution to SMPC simulations that use Monte Carlo sampling. For solving \eqref{eq:burgers}, an upwind finite-difference scheme was used with an implicit Euler scheme for time integration. Using $10000$ samples, we observe $\sim25$-fold computational speedup for comparable relative errors. 
\begin{figure}
\centering
\includegraphics[width=\linewidth]{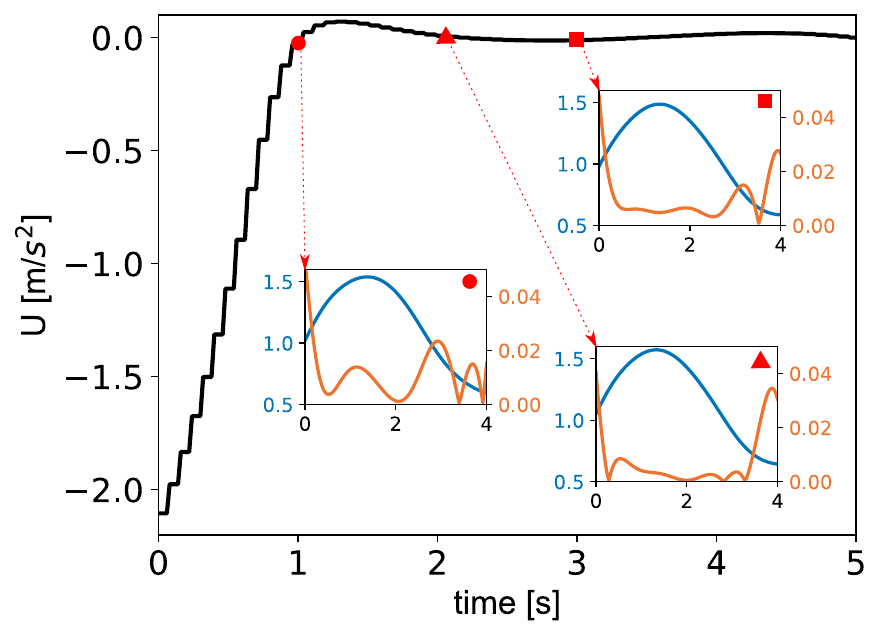}
    \vspace{-0.3cm}
    \caption{Temporal evolution of the control input. The insets at $t=1$, $2$ and $3$ s show the mean of the PINN-PCE solution (plotted on the primary vertical axis in blue) and the relative error between the mean of the PINN-PCE solution and the reference solution (plotted on the secondary vertical axis in orange).}
    \label{fig:Burgers_control}
\end{figure}

\subsection{Case Study II: Korteweg-de Vries equation}
\label{sec:case1}
The Korteweg-de Vries (KdV) equation models nonlinear dispersive wave phenomena, and is commonly applied to study shallow water waves and ion-acoustic waves in plasmas \cite{kdv1895}. Here, we demonstrate the integration of PINNs with PCE to solve the KdV equation when the dispersion coefficient is probabilistically uncertain. Additionally, we use the proposed framework to illustrate how to track a reference trajectory.

This case study is complementary to the Burgers' case study in Section~\ref{sec:case2} rather than a repetition of it. Burgers' equation is second-order in space, and its solution is dominated by convective transport balanced against diffusive dissipation, so uncertainty in $\nu$ primarily affects how sharp fronts smear out. The KdV equation, in contrast, is third-order in space and its solution is governed by a balance between nonlinear steepening and dispersive spreading, so uncertainty in the dispersion coefficient primarily affects the shape and phase speed of coherent structures such as solitons. From the PINN--PCE perspective, this changes the difficulty of the surrogate construction: the physics residual now involves a third-order spatial derivative, which increases automatic-differentiation cost and requires a smoother activation than $\tanh$ to stably represent the third derivative. Here, we use the SiLU activation. From the SMPC perspective, the reference trajectory is tracked against a two-soliton initial condition rather than a smooth wave, thereby exercising the surrogate on a solution class in which small-parameter perturbations produce visible phase shifts. Together, the Burgers' and KdV studies therefore probe convective-diffusive and dispersive regimes with different residual structure, different network design choices, and different sensitivity of the solution to the uncertain parameter, before the framework is applied to the higher-dimensional Navier--Stokes case.
The KdV equation with a control input $U(t)$ is
\begin{alignat}{2}
\label{eq:kdv1}
&\frac{\partial u}{\partial t} + 6 u \frac{\partial u}{\partial x} + \nu(\xi) \frac{\partial^3 u}{\partial x^3} = U(t), \quad x \in [0, L], \quad t \in [0, T],
\end{alignat}
where $u(x,t)$ is the wave amplitude, $\nu \sim \mathbb{U}[0.8, 1.2]$ is the uncertain dispersion coefficient, the spatial domain is $L = 20.0$ m, and the time horizon is $T = 2.0$ s. Periodic boundary conditions are enforced on both sides, i.e., $u(0,t)=u(L,t)$ and $d^i/dx^i u(0,t)=d^i/dx^i u(L,t)$ for $i=1, 2$ or $3$. 
At $t=0$, the initial wave amplitude follows the profile
\begin{align}
   u(x, 0) &= \frac{0.75}{2} \text{sech}^2\!\left[\frac{\sqrt{0.75}}{2}\!\left(x - \frac{L}{3}\right)\right]\nonumber \\
   &\quad + \frac{0.4}{2} \text{sech}^2\!\left[\frac{\sqrt{0.4}}{2}\!\left(x - \frac{2L}{3}\right)\right].
\end{align} 

The PINN model has three hidden layers with 64, 64, and 48 neurons, respectively, using the SiLU activation function, implemented in PyTorch, with $N_q = 12000$, $N_{\text{IC}} = 200$, and $N_{\text{BC}} = 200$. The PCE reconstruction used Legendre basis functions of order $P = 2$ and Gauss-Legendre quadrature with $N_s=5$.

\begin{figure}
    \centering
    \includegraphics[width=
    \linewidth]{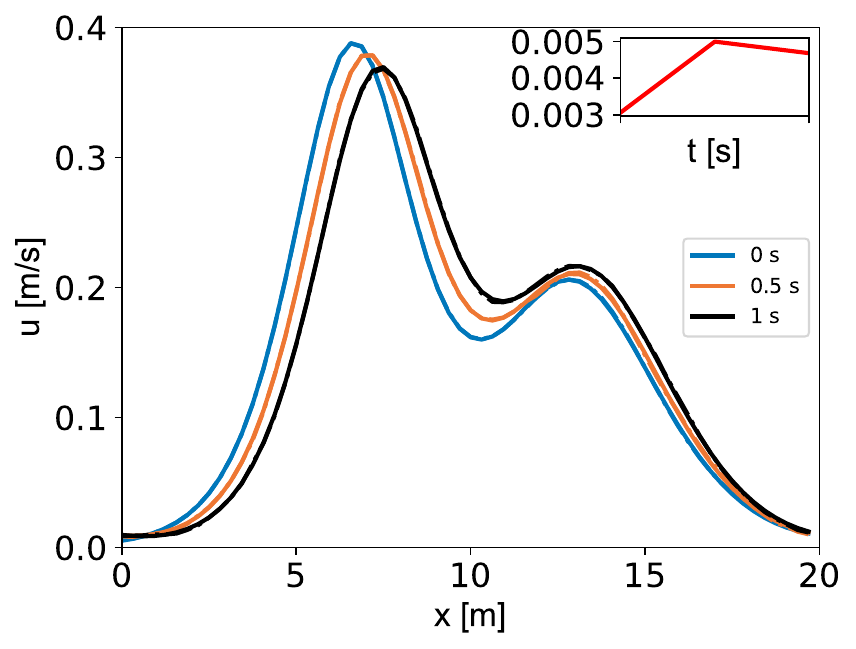}
    
    \vspace{-0.3cm}
    \caption{Mean of the PINN-PCE solution (solid line) versus mean of the analytical solution (dotted line) at $t = 0.5$, $t = 1.0$, and $t = 1.5$ s. The inset shows the time evolution of the $L^2$ norm of the relative error.}
    \label{fig:kdv_analysis}
\end{figure}

The SMPC strategy tracks the reference trajectory
\begin{align}
\label{eq:kdv_ref}
u_{\text{ref}}(x,t) = 0.5 \cos\!\left(\frac{\pi x}{L}\right) e^{-0.1t} + 0.05 \sin\!\left(\frac{2\pi t}{T}\right).
\end{align}
The time domain is discretized into 100 uniform intervals with $\Delta t = 0.02$ s. Every 5 time steps, the optimization
\begin{align}
\label{eq:kdv_mpc}
\min_{U_{0:N-1}} \sum_{i=0}^{N-1} \left( \mathbb{E} [\| u_{i|t} - u_{\text{ref}} \|^2] + \lambda U_{i|t}^2 \right),
\end{align}
is solved, where $u_{i|t}$ is the PINN-PCE predicted state, $\lambda = 0.01$ penalizes control inputs, $N = 20$ is the prediction horizon, and $|U_{i|t}| \leq 5.0$. The optimization used the L-BFGS-B algorithm, with states predicted using a forward-Euler scheme. The framework achieves an average tracking error between $4.0 \times 10^{-4}$ to $7.0 \times 10^{-4}$ over $t \in [0.5, 1.5]$, evaluated at $t = 0.5$, $1.0$, and $1.5$ s.
\begin{figure}
\centering
\includegraphics[width=\linewidth]{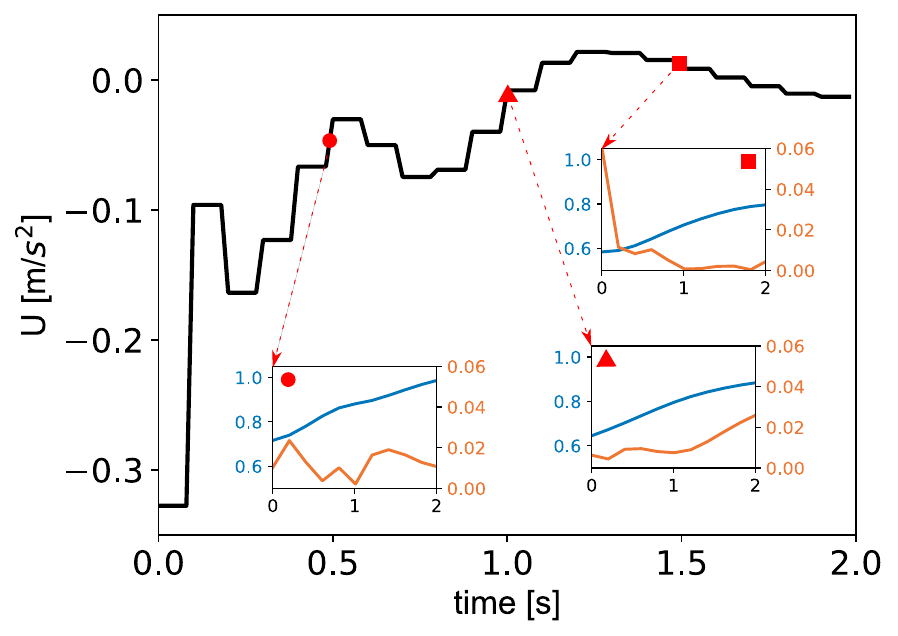}
\vspace{-0.3cm}
\caption{Temporal evaluation of the control input. The insets at $t=0.5$, $1.0$, and $1.5$ s show the mean of the PINN-PCE solution (plotted on the primary vertical axis in blue) and the relative error between the mean of the PINN-PCE solution and the reference solution (plotted on the secondary vertical axis in orange).}
    \label{fig:kdv_control}
\end{figure}

Figure~\ref{fig:kdv_control} illustrates the controlled solution tracking the reference trajectory \eqref{eq:kdv_ref}, the piecewise constant control input, and the relative error on a logarithmic scale. The PINN-PCE-SMPC framework achieves a total execution time of about 550 seconds (150 seconds for PINN training, 250 seconds for controlled PINN training, and 150 seconds for MPC simulation), compared to an estimated 14,000 seconds for a traditional numerical solver with Monte Carlo sampling (10,000 samples).


\subsection{Case Study III: Two-Dimensional Navier--Stokes Equations}
\label{sec:case3}
The final case study considers the two-dimensional Navier--Stokes equations for incompressible flow with uncertain viscosity. Unlike the Burgers'
and KdV examples, the state now contains two velocity components and a
pressure field. We use the Taylor--Green
vortex because its solution for a given viscosity is available in closed form, which provides an independent reference for validating both the
PINN approximation and the stochastic projection.

We consider the nondimensional periodic domain
$\Omega=[0,2\pi]\times[0,2\pi]$,
$t\in[0,5]$, with the governing equations
\begin{align}
\frac{\partial u}{\partial t}
+u\frac{\partial u}{\partial x}
+v\frac{\partial u}{\partial y}
&=
-\frac{\partial p}{\partial x}
+\nu(\xi)
\left(
\frac{\partial^2u}{\partial x^2}
+\frac{\partial^2u}{\partial y^2}
\right)
+f_x(x,y,t),
\label{eq:NS_u}
\\
\frac{\partial v}{\partial t}
+u\frac{\partial v}{\partial x}
+v\frac{\partial v}{\partial y}
&=
-\frac{\partial p}{\partial y}
+\nu(\xi)
\left(
\frac{\partial^2v}{\partial x^2}
+\frac{\partial^2v}{\partial y^2}
\right)
+f_y(x,y,t),
\label{eq:NS_v}
\\
\frac{\partial u}{\partial x}
+\frac{\partial v}{\partial y}
&=0.
\label{eq:NS_cont}
\end{align}
The uncertain kinematic viscosity is
$\nu(\xi)=0.01(1+0.2\xi)$,
$\xi\sim\mathcal U[-1,1]$,
so that
$\nu\in[0.008,0.012]$.
The initial condition is
$u(x,y,0)=-\sin(y)\cos(x)$,\\
$v(x,y,0)=\sin(x)\cos(y)$.
For the uncontrolled problem, $f_x=f_y=0$, the realization-wise
Taylor--Green solution is
\begin{align}\label{eq:ns_u}
u_{\rm an}(x,y,t,\xi)
&=
-A(t,\xi)\sin(y)\cos(x),
\\
\label{eq:ns_v}v_{\rm an}(x,y,t,\xi)
&=
A(t,\xi)\sin(x)\cos(y),
\end{align}
with
$A(t,\xi)=\exp[-2\nu(\xi)t]$,
and corresponding zero-mean pressure
\begin{equation}
p_{\rm an}(x,y,t,\xi)
=
-\frac{A^2(t,\xi)}{4}
\left[
\cos(2x)+\cos(2y)
\right].
\label{eq:NS_exact_p}
\end{equation}
These expressions are used only for independent validation.

The PINN approximates
$(x,y,t,\xi)
\mapsto
(\widehat u,\widehat v,\widehat p)$.
The network contains five hidden layers with 128 neurons per layer and
uses the $\tanh$ activation function. Each training update uses 4096
interior collocation points, 1024 initial-condition points, and 1024
periodic-boundary points. The viscosity uncertainty is represented with
a Legendre PCE of order $P=3$ and $N_s=5$ Gauss--Legendre nodes,
giving $N_P=4$ retained coefficients.

The mean and variance are obtained from
the PCE coefficients as described in Section~\ref{sec:pinn_pce_algo}.
Figure~\ref{fig:NS_forward_mean} shows the mean velocity fields at
$t=1$, $3$, and $5$, while Figure~\ref{fig:NS_forward_std} shows the
corresponding standard-deviation fields.

\begin{figure*}[t]
\centering
\begin{tabular}{ccc}
\includegraphics[width=0.31\textwidth]{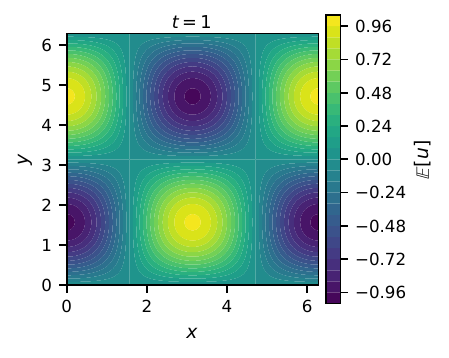} &
\includegraphics[width=0.31\textwidth]{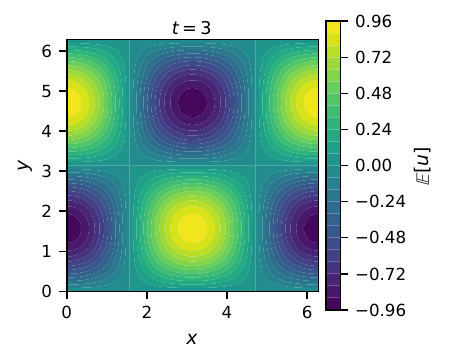} &
\includegraphics[width=0.31\textwidth]{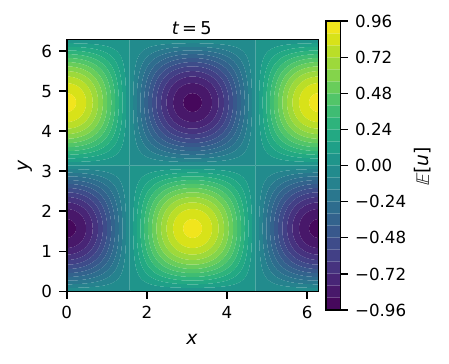}
\\[-0.1cm]
\includegraphics[width=0.31\textwidth]{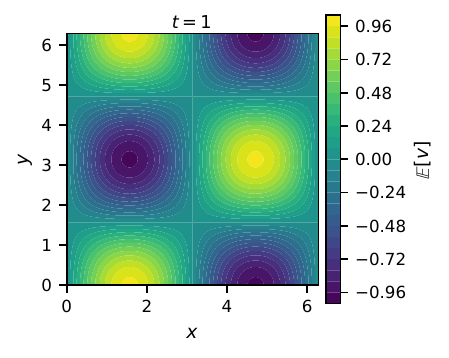} &
\includegraphics[width=0.31\textwidth]{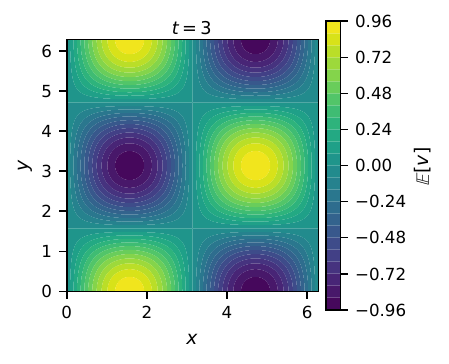} &
\includegraphics[width=0.31\textwidth]{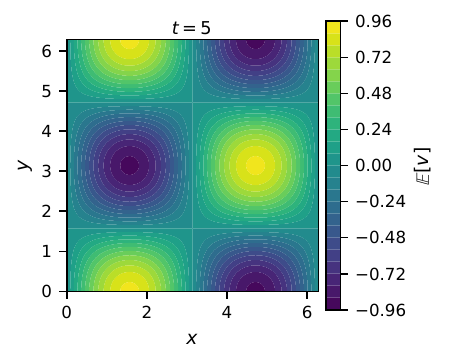}
\end{tabular}
\caption{Mean velocity fields for the uncertain-viscosity
Navier--Stokes case. The top row shows $\mathbb E[u]$ and the bottom
row shows $\mathbb E[v]$ at $t=1$, $3$, and $5$.}
\label{fig:NS_forward_mean}
\end{figure*}

\begin{figure*}[t]
\centering
\begin{tabular}{ccc}
\includegraphics[width=0.31\textwidth]{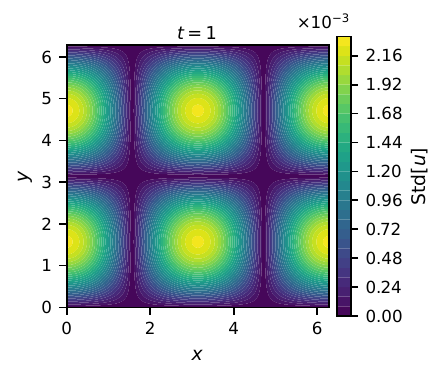} &
\includegraphics[width=0.31\textwidth]{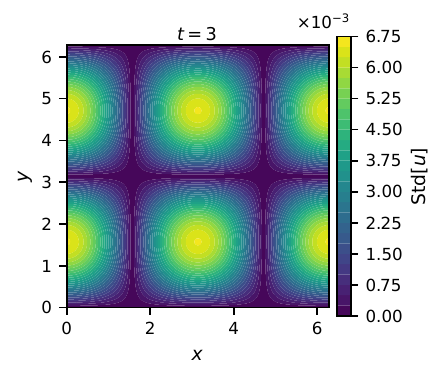} &
\includegraphics[width=0.31\textwidth]{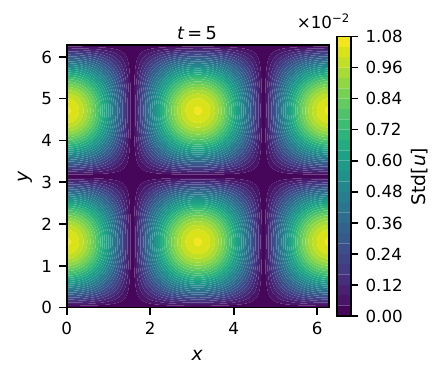}
\\[-0.1cm]
\includegraphics[width=0.31\textwidth]{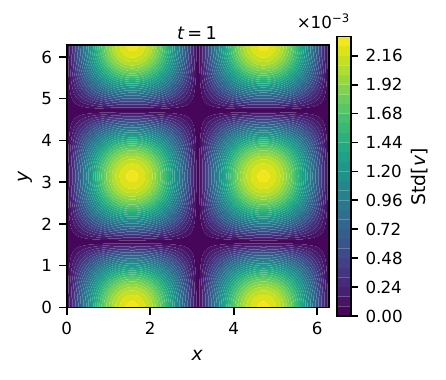} &
\includegraphics[width=0.31\textwidth]{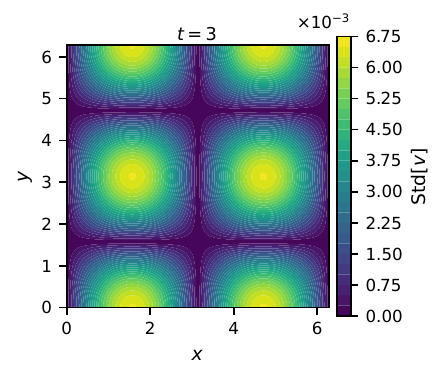} &
\includegraphics[width=0.31\textwidth]{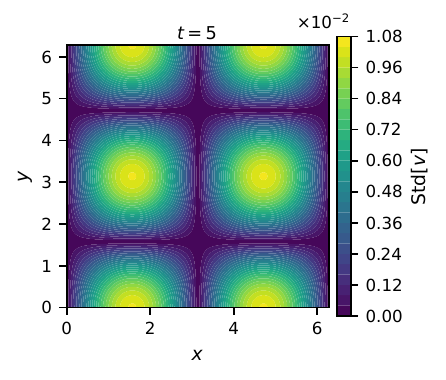}
\end{tabular}
\caption{Standard-deviation fields induced by viscosity uncertainty.
The top row shows $\operatorname{Std}[u]$ and the bottom row shows
$\operatorname{Std}[v]$ at $t=1$, $3$, and $5$.}
\label{fig:NS_forward_std}
\end{figure*}

The stochastic projection was checked against an independent Monte
Carlo calculation with $5\times10^4$ viscosity samples. The relative
PCE vs MC discrepancies over the full time horizon are
$2.02\times10^{-5}$ for the temporal mean and
$1.91\times10^{-4}$ for the temporal standard deviation. These values
measure the PCE projection accuracy, not the PINN approximation error.
The PINN error is evaluated separately using the validation metric
$E_p$ defined in Assumption~\ref{assume:pinn_accuracy}.

\begin{figure*}[t]
\centering
\begin{tabular}{cc}
\includegraphics[width=0.45\textwidth]{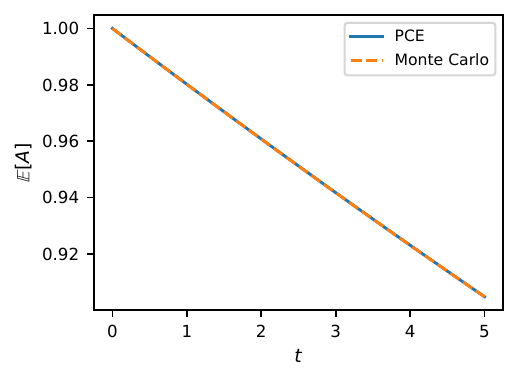} &
\includegraphics[width=0.45\textwidth]{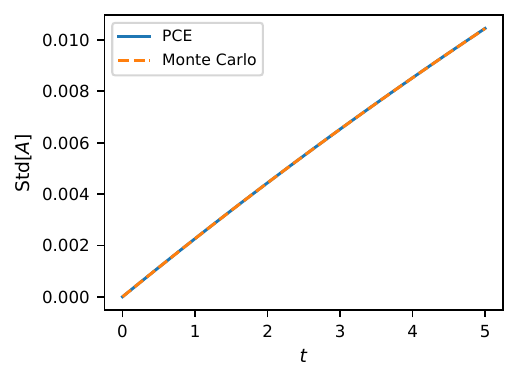}
\\[-0.1cm]
(a) & (b)
\\[0.15cm]
\includegraphics[width=0.45\textwidth]{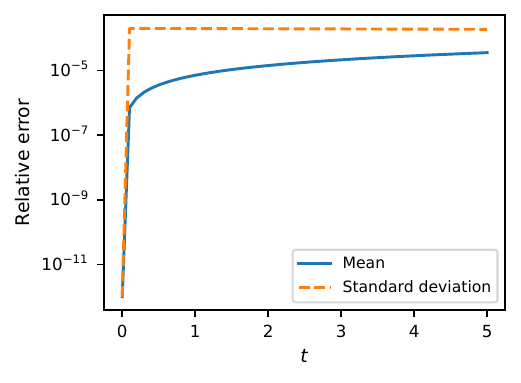} &
\includegraphics[width=0.45\textwidth]{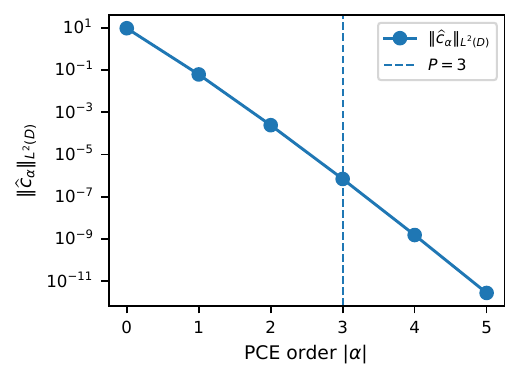}
\\[-0.1cm]
(c) & (d)
\end{tabular}
\caption{Stochastic validation of the Navier--Stokes case.
(a) Mean amplitude from PCE and Monte Carlo.
(b) Standard deviation from PCE and Monte Carlo.
(c) Relative PCE-versus-Monte-Carlo discrepancies.
(d) Decay of the PCE coefficient norms, with the operational order
$P=3$ indicated by the dashed line.}
\label{fig:NS_pce_validation}
\end{figure*}

For the controlled problem, we use the divergence-free forcing
\begin{align}
f_x(x,y,t)
&=
-a(t)\sin(y)\cos(x),
\label{eq:NS_fx}
\\
f_y(x,y,t)
&=
a(t)\sin(x)\cos(y),
\label{eq:NS_fy}
\end{align}
where $a(t)$ is the scalar manipulated input. This forcing preserves
the Taylor--Green spatial structure, so the controlled velocity field
can be same way \eqref{eq:ns_u} and \eqref{eq:ns_v}
with amplitude dynamics
\begin{equation}
\dot A(t,\xi)
=
-2\nu(\xi)A(t,\xi)+a(t).
\label{eq:NS_amplitude}
\end{equation}

The time-varying reference amplitude is
\begin{equation}
A_{\rm ref}(t)
=
0.95
+
0.05
\cos\left(
\frac{2\pi t}{T}
\right),
\qquad T=5,
\label{eq:NS_Aref}
\end{equation}
which decreases from $1$ at $t=0$ to $0.90$ at $t=2.5$ and returns to
$1$ at $t=5$. The corresponding reference field is
\begin{align}
u_{\rm ref}(x,y,t)
&=
-A_{\rm ref}(t)\sin(y)\cos(x),
\\
v_{\rm ref}(x,y,t)
&=
A_{\rm ref}(t)\sin(x)\cos(y).
\label{eq:NS_reference}
\end{align}

For this Taylor--Green family,
\begin{equation}
\|u-u_{\rm ref}\|_{L^2(\Omega)}^2
+
\|v-v_{\rm ref}\|_{L^2(\Omega)}^2
=
2\pi^2
(A-A_{\rm ref})^2,
\label{eq:NS_spatial_amp_equiv}
\end{equation}
so amplitude tracking is equivalent, up to a fixed scaling, to
tracking the complete two-dimensional velocity field.

The case-specific SMPC problem is
\begin{align}
\min_{\{a_{k+i|k}\}_{i=0}^{N_c-1}}
\quad
\sum_{i=1}^{N_p}
&\Bigg[
\mathbb E
\left[
\left(
A_{k+i|k}
-
A_{\rm ref}(t_{k+i})
\right)^2
\right]
\nonumber\\
&+
\rho
\operatorname{Var}(A_{k+i|k})
+
\lambda_a
a_{k+\min(i-1,N_c-1)|k}^{2}
\Bigg],
\label{eq:NS_SMPC_cost}
\end{align}
subject to
\begin{equation}
-0.10
\le
a_{k+i|k}
\le
0.10.
\label{eq:NS_control_constraint}
\end{equation}
We use
$\Delta t_c=0.1$,
$N_p=15$,
$N_c=5$,
with
$\rho=0.2$,
$\lambda_a=0.005$.
Only the first optimal control move is applied before the horizon is
shifted.
Figure~\ref{fig:NS_control_performance} shows the resulting closed-loop
response. For the independent PCE--SMPC verification calculation, the
uncontrolled mean trajectory has an amplitude RMSE of
$4.51\times10^{-2}$, whereas the controlled mean RMSE is
$2.04\times10^{-4}$. The control input varies between approximately
$-0.0434$ and $0.0815$. The integrated expected tracking error is
reduced from $1.06\times10^{-2}$ to $1.90\times10^{-4}$, corresponding
to a reduction of approximately $98.2\%$.
\begin{figure*}[t]
\centering
\begin{tabular}{ccc}
\includegraphics[width=0.31\textwidth]{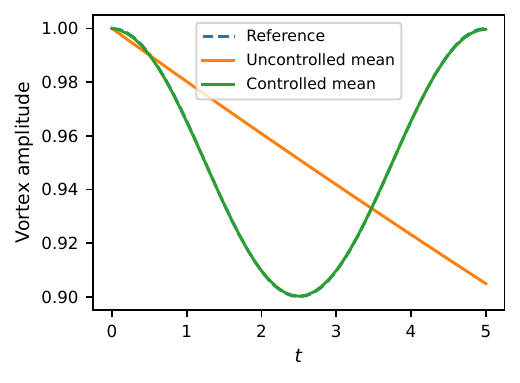} &
\includegraphics[width=0.31\textwidth]{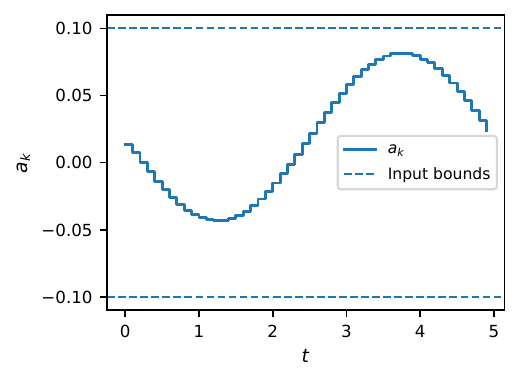} &
\includegraphics[width=0.31\textwidth]{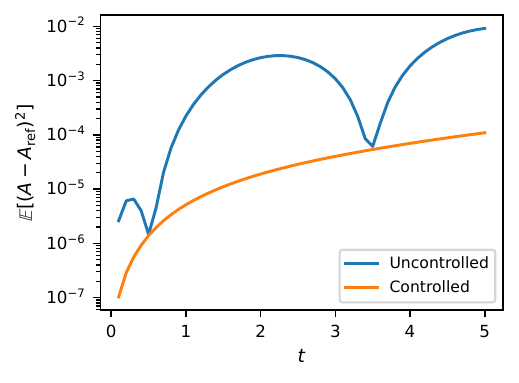}
\\[-0.1cm]
(a) & (b) & (c)
\end{tabular}
\caption{Closed-loop response for the uncertain-viscosity
Navier--Stokes case.
(a) Reference, uncontrolled mean, and controlled mean vortex amplitude.
(b) Receding-horizon control input.
(c) Expected tracking error with and without control.}
\label{fig:NS_control_performance}
\end{figure*}

The controlled mean velocity fields are shown in
Figure~\ref{fig:NS_controlled_mean}. The controller changes the vortex
strength while preserving the spatial phase and symmetry of the
Taylor--Green structure.

\begin{figure*}[t]
\centering
\begin{tabular}{ccc}
\includegraphics[width=0.31\textwidth]{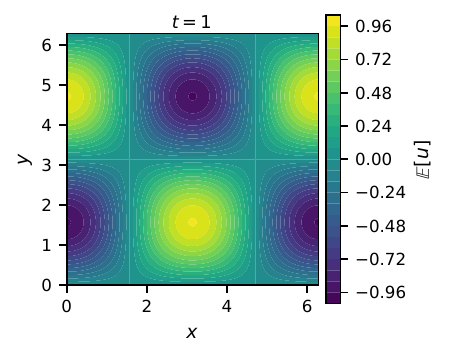} &
\includegraphics[width=0.31\textwidth]{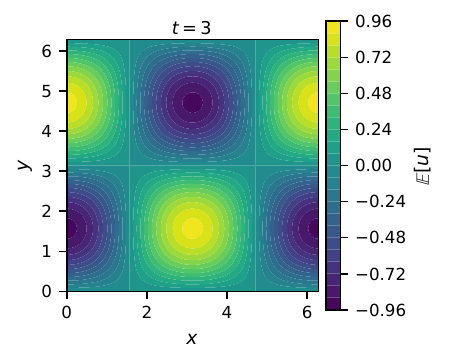} &
\includegraphics[width=0.31\textwidth]{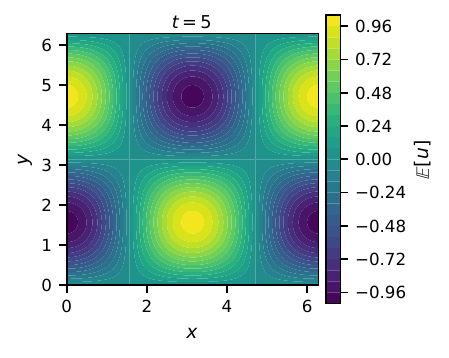}
\\[-0.1cm]
\includegraphics[width=0.31\textwidth]{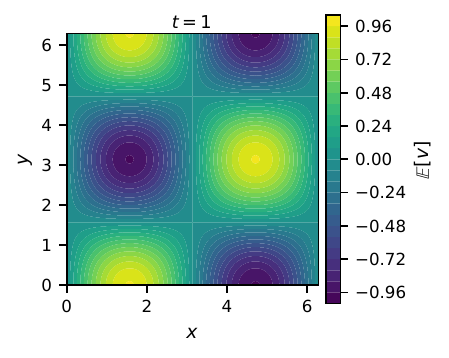} &
\includegraphics[width=0.31\textwidth]{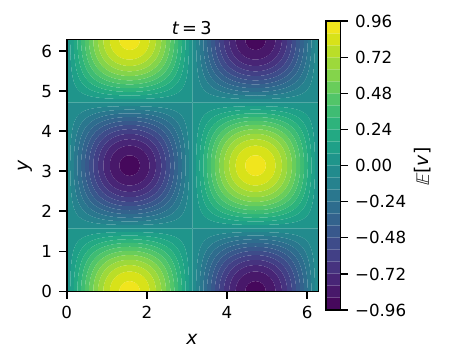} &
\includegraphics[width=0.31\textwidth]{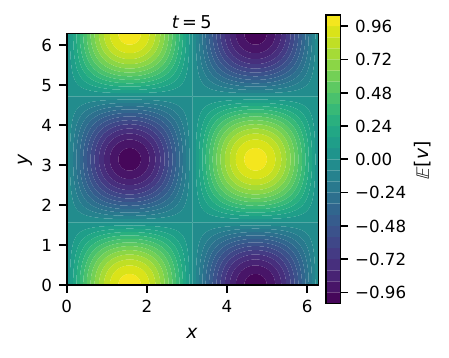}
\end{tabular}
\caption{Controlled mean velocity fields. The top row shows
$\mathbb E[u]$ and the bottom row shows $\mathbb E[v]$ at $t=1$,
$3$, and $5$.}
\label{fig:NS_controlled_mean}
\end{figure*}

Finally, Figure~\ref{fig:NS_tracking_error_fields} reports the absolute
mean tracking-error fields. We use an absolute error rather than a
pointwise relative error because both velocity components contain
spatial zero crossings, where division by the reference velocity can
produce artificially large values.

\begin{figure*}[t]
\centering
\begin{tabular}{ccc}
\includegraphics[width=0.31\textwidth]{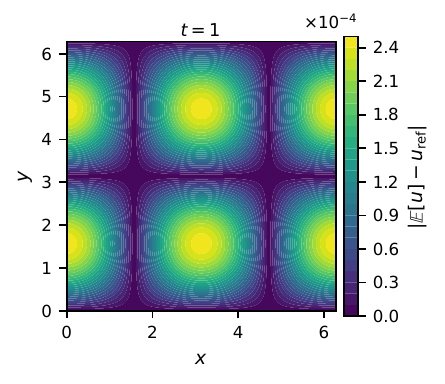} &
\includegraphics[width=0.31\textwidth]{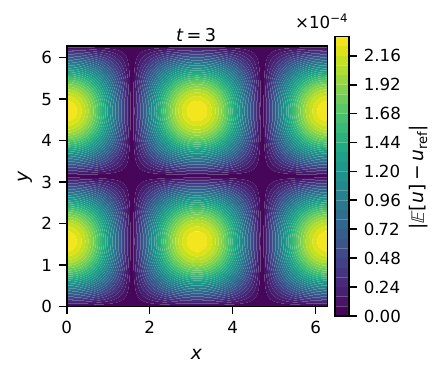} &
\includegraphics[width=0.31\textwidth]{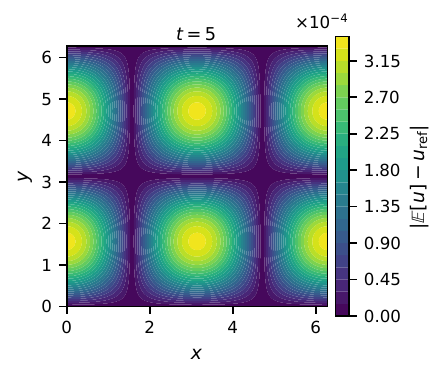}
\\[-0.1cm]
\includegraphics[width=0.31\textwidth]{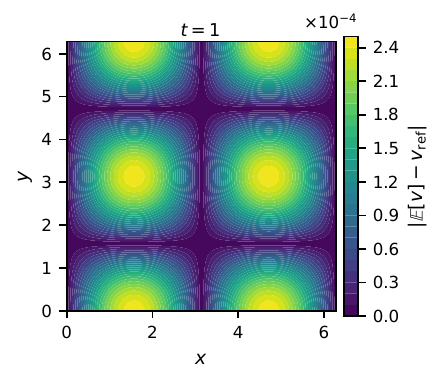} &
\includegraphics[width=0.31\textwidth]{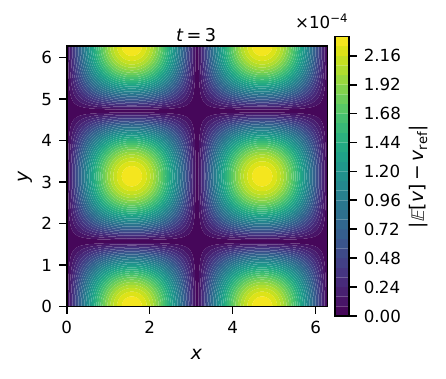} &
\includegraphics[width=0.31\textwidth]{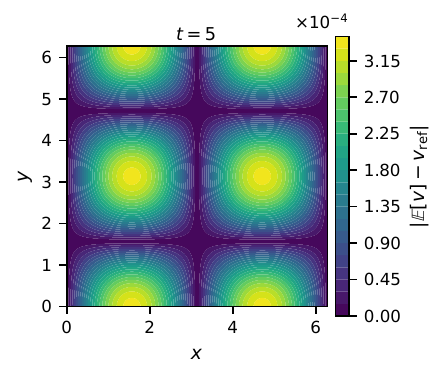}
\end{tabular}
\caption{Absolute mean tracking-error fields. The top row shows
$|\mathbb E[u]-u_{\rm ref}|$ and the bottom row shows
$|\mathbb E[v]-v_{\rm ref}|$ at $t=1$, $3$, and $5$.}
\label{fig:NS_tracking_error_fields}
\end{figure*}

This case study therefore tests three distinct components of the
framework: two-dimensional PINN prediction, PCE-based propagation of
viscosity uncertainty, and receding-horizon control of the uncertain
vortex amplitude. The Taylor--Green structure keeps each component
independently verifiable, so the example is interpreted as a
two-dimensional stochastic control benchmark rather than as a general
turbulent-flow application.

\section{Conclusions}\label{sec:conclusion}
This paper develops a practical surrogate-based route to stochastic MPC for PDE-governed systems with parametric uncertainty. The approach combines a physics-informed neural network for fast state prediction with non-intrusive PCE for moment-based uncertainty propagation, thereby avoiding Monte Carlo sampling inside the online control loop.
On the analysis side, we provide an explicit decomposition of the total error into (i) PCE truncation, (ii) stochastic quadrature or projection, and (iii) PINN approximation terms, yielding convergence bounds in $L^2$ and an induced $L^1$ bound under finite-measure assumptions. These results clarify how accuracy scales with the truncation order $P$, the number of quadrature points $N_s$, and the PINN approximation error.

From a control standpoint, the proposed surrogate enables a practical stochastic MPC implementation for PDE systems under parametric uncertainty. Each control update follows the standard receding horizon loop estimate the plant state, optimize a horizon length tracking policy under constraints, apply the first input, and repeat while replacing expensive sampling based uncertainty propagation with a PINN-PCE pipeline. This separation (PINN for fast horizon prediction, PCE for moment based risk evaluation) explains why the controller maintains tracking performance with orders-of-magnitude lower runtime than MCMC based SMPC in the reported Burgers', KdV, and 2D Navier--Stokes studies.

We emphasize the scope of the theoretical guarantees provided in this work. Theorem~\ref{lem:pce_truncation} and Corollary~\ref{cor:pinn_pce_l1_error} establish an error decomposition and convergence of the PINN--PCE surrogate. Proposition~\ref{prop:error_tightened_chance} establishes a one-step probabilistic constraint guarantee: if the surrogate prediction error is bounded by \(\varepsilon_P\) at a given prediction step, the tightened chance constraint~\eqref{eq:smpc_chance_tightened} protects the corresponding nominal chance constraint against that error. A full recursive feasibility and closed-loop stability analysis for the proposed chance-constrained SMPC under generic nonlinear stochastic PDE dynamics with a data-driven surrogate lies outside the scope of the present paper. Such an analysis would require additional structure, for example a robust terminal set construction, a tube-based reformulation, or a contraction-based argument tailored to the discretized PDE. We therefore restrict the present theoretical claims to the surrogate error decomposition and the one-step tightening result, and leave the closed-loop stability analysis to future work.

\section{Acknowledgment}
 This research was supported by the U.S. Food and Drug Administration under the FDA BAA-22-00123 program, Award Number 75F40122C00200.

\section*{Code availability}
The numerical scripts, processed data, parameter samples, and plotting routines
required to reproduce the final reported figures and tables will be deposited
in a public archive with a persistent identifier before publication. During
review, the complete reproducibility package is available from the
corresponding author upon reasonable request.

\bibliographystyle{elsarticle-num}
\bibliography{reference}
\end{document}